\documentclass[letterpaper, 10 pt, conference]{ieeeconf}  

\IEEEoverridecommandlockouts                              
\makeatletter
\let\NAT@parse\undefined
\makeatother
  
\usepackage{amsmath}   
\usepackage{amssymb}   
 \usepackage{amsthm}    
\usepackage{latexsym}  

\usepackage[authoryear,round,longnamesfirst]{natbib}

\usepackage[capitalize]{cleveref}
\usepackage{graphicx}
\usepackage{xurl}
\usepackage{mpcsymbols}
\usepackage{catchfilebetweentags}
\usepackage{flushend}

\AddToHook{env/proposition/begin}{\crefalias{theorem}{proposition}}
\AddToHook{env/lemma/begin}{\crefalias{theorem}{lemma}}
\AddToHook{env/corollary/begin}{\crefalias{theorem}{corollary}}
\AddToHook{env/definition/begin}{\crefalias{theorem}{definition}}
\AddToHook{env/remark/begin}{\crefalias{theorem}{remark}}

\newtheorem{theorem}{Theorem}
\newtheorem{lemma}[theorem]{Lemma}
\newtheorem{proposition}[theorem]{Proposition}

\theoremstyle{definition}
\newtheorem{definition}[theorem]{Definition}
\newtheorem{remark}[theorem]{Remark}
\newcommand{\oM}{\overline{M}}

\renewcommand{\paragraph}[1]{\textbf{#1}}

\title{On asymptotic stability of the time-varying Kalman filter for unstabilizable linear systems: an optimization perspective}

\author{James B. Rawlings, Titus Quah, and Matthias A. M{\"u}ller%
\thanks{J.B. Rawlings and T. Quah are with Department of Chemical Engineering,
University of California, Santa Barbara, CA 93106}%
\thanks{M.A. M{\"u}ller is with Leibniz University Hannover, Institute of Automatic Control, 30167 Hannover, Germany}%
}

\begin{document}

\maketitle
\thispagestyle{empty}
\pagestyle{empty}

\begin{abstract}
This paper establishes the necessary and sufficient conditions for asymptotic stability of the time-varying Kalman filter applied to a linear time invariant system with semidefinite initial state covariance and positive definite process and measurement noise.  Rather than analyze the discrete Riccati equation as in the classic literature, the equivalent state smoothing optimization problem is stated and all results are established using properties of this optimization problem. A Lyapunov-like function, termed a modified $Q$-function is derived and used for this analysis.
This optimization approach removes the need for the classic but cumbersome Riccati iteration algebra and provides better generalization and application for nonlinear systems.
\end{abstract}

\section{Introduction, Motivation, and Literature Review}
State estimation of a linear system by the Kalman filter lies at the heart of control and signal processing, with two closely related questions running through the literature: whether the estimator error dynamics are stable and whether the corresponding Riccati recursion governing the error covariance converges. The foundational work of \citet{kalman:bucy:1961} in continuous time and \citet{deyst:price:1968} in discrete time addressed both questions under strong observability and controllability assumptions. Subsequent work progressively weakened these assumptions. For Riccati convergence, observability was relaxed to detectability by \citet{wonham:1968a} and \citet{caines:mayne:1970}, and controllability to stabilizability by \citet{kucera:1973} and \citet{hager:horowitz:1976}; the latter also established stability results for the associated time-varying filter. \citet{anderson:moore:1981} subsequently established exponential stability of the time-varying discrete-time Kalman filter under uniform detectability and stabilizability. For a time, detectability and stabilizability appeared to mark the boundary of both theories. However, the Riccati-convergence theory eventually weakened stabilizability. \citet{desouza:gevers:goodwin:1986} showed that stabilizability is not necessary, and \citet{callier:winkin:1995} ultimately gave necessary and sufficient conditions for convergence to the strong solution of the Riccati equation—the limit whose associated steady-state filter may have modes on the unit circle. This is where the two questions part ways: covariance convergence and stable estimator error dynamics are closely related, but they are not the same question.

This distinction matters because convergence of the covariance is not convergence of the estimator. When the Riccati recursion converges to a stabilizing solution, the distinction is hidden: the limiting steady-state Kalman filter is exponentially stable, and the time-varying estimation error decays exponentially as well. However, when the covariance converges to a strong solution, the two properties need no longer coincide. The covariance may converge to a strong solution while the time-varying estimation error does not decay. More surprisingly, the time-varying Kalman filter may be asymptotically stable even though the steady-state filter to which it converges is not (see Exercise 4.18 in \citet{rawlings:mayne:diehl:2020}). Thus a Riccati convergence theorem, by itself, does not answer the question we address here: under precisely what conditions is the time-varying Kalman filter asymptotically stable?

We consider the linear, discrete, time invariant system
\begin{equation}
x^+  = Ax + Gw \qquad y = Cx + v
\label{eq:linsys}
\end{equation}
with random initial condition $x(0) \sim N(\ox_0, \Sigma_0)$ and noise
sequences $w(k) \sim N(0, Q)$ and $v(k) \sim N(0,R)$, each independent and
identically distributed in time, with $x(0)$, $w(\cdot)$, and $v(\cdot)$
mutually independent. We
assume throughout that $Q > 0, R > 0, \Sigma_0 \geq 0$. We shall consider more
restrictive assumptions on $\Sigma_0$ as we progress.
The time-varying Kalman filter estimate $\xhat(k)$, the one-step-ahead
prediction of $x(k)$ conditioned on the measurements $y(0),\ldots,y(k-1)$ (so
that $\xhat(0)=\ox_0$), has estimation error $\ehat(k) \eqbyd x(k)-\xhat(k)$
with covariance $\Sigma(k)$ given by the standard discrete Riccati iteration
(DRE)
\begin{align}
\Sigma(k+1) &= GQG' + A \Sigma(k) A' - \notag \\
&\phantom{=}  A \Sigma(k)C'(C\Sigma(k)C' + R)^{-1}C\Sigma(k) A'
\label{eq:dre}
\end{align}
with $\Sigma(0) = \Sigma_0, \, k\geq 0$.
We also consider system \cref{eq:linsys} in an orthogonal stabilizability canonical form
\begin{align}
\begin{bmatrix}x_1 \\ x_2 \end{bmatrix}^+ &=
\begin{bmatrix} A_1 & A_{12} \\ 0 & A_2 \end{bmatrix}
\begin{bmatrix}x_1 \\ x_2 \end{bmatrix} +
\begin{bmatrix} G_1 \\ 0 \end{bmatrix} w
\label{eq:contrb} \\
y &= \begin{bmatrix} C_1 & C_2 \end{bmatrix}
\begin{bmatrix}x_1 \\ x_2 \end{bmatrix} +  v \notag
\end{align}
in which $(A_1,G_1)$ is stabilizable, and $A_2$ is
completely unstable; there are $n_2$ unstabilizable modes of the system.

Two conditions on the system and prior govern the result.
\begin{enumerate}
\item[C1:] $(A,C)$ is detectable.
\item[C2:] The intersection of the null space of $\Sigma_0$  and the 
 unstable, uncontrollable subspace  of $(A,G)$ is zero.
\end{enumerate}
Our main result is the following.
\begin{proposition}[Time-varying estimator stability]
\label{prop:tvkf}
The time-varying linear optimal estimator is asymptotically stable if and only if
C1 and C2 hold.
\end{proposition}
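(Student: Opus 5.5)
We will work with the estimator's error dynamics, which are deterministic and noise-free, rather than with the Riccati iteration \cref{eq:dre}. By "asymptotically stable" we mean the following. Take the full-information least-squares smoothing problem on horizon $k$. Its cost is $|x(0)-\ox_0|^2_{\Sigma_0^{-1}}$ (the inverse read on the range of $\Sigma_0$, with $x(0)-\ox_0\in\mathcal{R}(\Sigma_0)$ enforced as a constraint), plus $\sum_{j<k}|w(j)|^2_{Q^{-1}}$, plus $\sum_{j<k}|y(j)-Cx(j)|^2_{R^{-1}}$. The predicted estimate $\xhat(k)$ is the terminal state of this problem's solution, propagated one step. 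The nominal error system sets $w=0$ and $v=0$. The claim is that, for every admissible initial error $x(0)-\ox_0\in\mathcal{R}(\Sigma_0)$, the error $\ehat(k)$ is Lyapunov stable and converges to zero. Working with the optimization problem, the optimal cost $V_k$ becomes a natural candidate Lyapunov function. Because $\Sigma_0$ may be singular, $V_k$ is not positive definite in the error, so we will build a modified $Q$-function: the optimal cost plus a terminal penalty measuring the prediction error in the unstabilizable coordinates $x_2$ of \cref{eq:contrb}.

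\textbf{Necessity.} We argue both conditions by counterexample.
\begin{itemize}
\item If C1 fails, take an unstable unobservable eigenvector $z$ of $A$ with $Az=\lambda z$, $|\lambda|\ge1$, $Cz=0$. With zero noise, an initial error along $z$ (or any admissible error whose component in the unobservable unstable subspace is nonzero) never changes the measurements. The filter therefore cannot correct it, and it evolves as $\lambda^k z$, which does not decay. If $\Sigma_0$ puts no weight on $z$ but $(A,G)$ excites it, we instead note that the error dynamics $(A-L_kC)$ restricted to the unobservable subspace equal $A$ there, which already blocks stability.
\item If C2 fails, pick $z\neq0$ in $\mathcal{N}(\Sigma_0)\cap$ (the unstable uncontrollable subspace). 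The filter treats the $z$-direction as known exactly forever: $\Sigma(k)z'$-type components stay zero, since neither the prior nor the process noise ever injects uncertainty there. The gain therefore never acts on that direction. A perturbation of the true $x(0)$ along $z$ is a disturbance the estimator ignores, and it grows like an unstable mode.
\end{itemize}
We will phrase the second case through the optimization: the constraint $x(0)-\ox_0\in\mathcal{R}(\Sigma_0)$, together with $G_2=0$, pins the optimal $x_2$ trajectory to $A_2^k$ applied to the prior. Note that "the estimator is asymptotically stable" must include robustness to initial conditions of the true system. We will state this interpretation explicitly.

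\textbf{Sufficiency.} Assume C1 and C2. Split the state as in \cref{eq:contrb}.
\begin{enumerate}
\item Because $G_2=0$, the $x_2$ trajectory is deterministic given $x_2(0)$. By C2, the prior on $x_2(0)$ (the projection of $\Sigma_0$ onto the unstable uncontrollable coordinates) is positive definite. The smoothing cost then contains a positive definite quadratic in $x_2(0)$.
\item By detectability, the information about $x_2(0)$ accumulated from measurements grows with $k$. Combined with the complete instability of $A_2$, this makes the optimal $x_2$ prediction error converge to zero. We will show this with a quadratic bound $|\ehat_2(k)|^2\le c\,V_k$ together with a decrease argument.
\item For the stabilizable part $(A_1,G_1)$ with $Q>0$, the standard full-information argument applies. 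Using C1 and $Q>0$, the optimal cost satisfies a decrease inequality of the form $V_{k+1}-V_k\le-\alpha(|\ehat(k)|)$ along the nominal error system, up to terms driven by $\ehat_2$, which vanish by the previous step.
\item Detectability then gives an $i$-IOSS-type bound, converting decrease of the cost and convergence of the outputs into convergence of $\ehat$.
\item Lyapunov stability follows from the upper bound $V_k\le c|\ehat(0)|^2_{\Sigma_0^{-1}}$, which holds because the optimal cost is at most the cost of the true trajectory.
\end{enumerate}

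\textbf{Main obstacle.} The hard step is the uniform decrease in the sufficiency argument when $\Sigma_0$ is singular in the stabilizable directions. There $V_k$ has no positive-definite lower bound at $k=0$. We plan to show that after $n$ steps the process-noise terms with $Q>0$ and stabilizability of $(A_1,G_1)$ make the modified $Q$-function positive definite in $\ehat_1$. This would give an $n$-step Lyapunov decrease, from which asymptotic stability follows.
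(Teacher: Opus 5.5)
You have genuine gaps in both directions, starting with your definition. Restricting initial errors to $x(0)-\ox_0\in\mathcal{R}(\Sigma_0)$ makes the ``only if'' direction false. Take $\Sigma_0=0$, $A=2$, $G=0$, $C=0$: the only admissible error is zero, so the filter is trivially stable, yet C1 and C2 both fail. The paper quantifies over all $x(0),\ox_0\in\mathbb{R}^n$ (Definitions~\ref{def:classicalGAS} and~\ref{def:GAS}). The C2 witness lies in $\mathcal{N}(\Sigma_0)$, which is exactly outside your admissible set; you sense this but never reconcile it. With unrestricted errors, your C1 argument is correct, and more elementary than the citation the paper relies on: the unobservable subspace is invariant under every $A-L(k)C$, which acts there as $A$.

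Your C2 mechanism, however, is false. $\Sigma(k)z$ does not stay zero, because $A'z$ generally leaves $\mathcal{N}(\Sigma_0)$. For example, take $A=\left[\begin{smallmatrix}0&-1\\1&0\end{smallmatrix}\right]$, $G=0$, $C=[1\ \ 0]$, $R=1$, $\Sigma_0=\mathrm{diag}(1,0)$, $z=e_2$; C1 holds and C2 fails, yet $\Sigma(1)=\tfrac12e_2e_2'$. Nor does the constraint pin the $x_2$ trajectory to $A_2^k$ applied to the prior. It pins only the component of $\xhat(0\mid k)$ along the witness, and the rest of $\xhat_2(0\mid k)$ is re-estimated at every $k$. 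Because $\xhat(0\mid k)$ moves with $k$, ``pinned plus unstable'' does not give nonconvergence pointwise. For a Jordan block $J$ at $1$, $J^k(c,-c/k)'=(0,-c/k)'\to0$ even though the first component stays $c$.

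This is why the paper argues on averages. With $x(0)=0$ and $\ox_0=-\xi$, the pin gives $\|\xhat_2(0\mid T)\|\ge\|\xi\|$. The Gramian bound \eqref{eq:gramian} and the IOSS inequality then force $V_T^0$ to grow linearly in $T$. Meanwhile $V_T^0\le c_y\sum_{k<T}\|\xhat(k)\|^2$, so the Ces\`aro mean of $\|\xhat(k)\|^2$ stays bounded away from zero.

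For sufficiency, your key inequality has the wrong sign. The truncated horizon-$(k+1)$ optimizer is feasible at horizon $k$, so $V_{k+1}^0\ge V_k^0+\ell(\what(k\mid k+1),\vhat(k\mid k+1))$. The full-information cost is therefore nondecreasing and can never satisfy $V_{k+1}^0-V_k^0\le-\alpha(|\ehat(k)|)$. Likewise, $|\ehat_2(k)|^2\le cV_k^0$ yields only boundedness, since $V_k^0\uparrow V_\infty^0$, which is generally positive.

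The missing idea is the paper's reversal $Z(j\mid k)=V_\infty^0-V^0(j\mid k)$. It requires $V_\infty^0<\infty$ for \emph{all} initial errors, and that is exactly what C2 supplies in \cref{lem:unibounded}. Every $\chi_2(0)$ is feasible, so one takes $\chi_2(0)=x_2(0)$ and lets stabilizability of $(A_1,G_1)$ absorb $e_1$. Your step 1 (positive definite marginal prior on $x_2$) is an equivalent form of C2. But it must feed this bound, not your step-5 bound via the true trajectory, which is infeasible when $x(0)-\ox_0\notin\mathcal{R}(\Sigma_0)$. The rest of the paper's construction runs as follows.
\begin{itemize}
\item Adding $V_\io(\xhat(j\mid k)-x(j))$ gives the lower bound and the decrease.
\item \eqref{eq:rangebound} gives the upper bound, but only at $j=0$.
\item \cref{prop:infhor}, which uses C1 again, passes $k\to\infty$.
\item Linearity ($M(k)\to0$) restores uniformity.
\end{itemize}

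So your ``main obstacle'' is misplaced. The lower bound comes free from the IOSS term; what is lost without stabilizability is an upper bound at $j>0$. No argument with time-uniform quadratic bounds and quadratic decrease can succeed, because the filter need not be exponentially stable. For $A=C=\Sigma_0=R=1$, $G=0$, one has $\ehat(k)=\ehat(0)/(k+1)$.
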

We establish this equivalence from the optimization problem the time-varying Kalman filter solves, not from the Riccati recursion. 
The tool is a Lyapunov-like $Q$-function, adapted from \citet{allan:rawlings:2019b}. Their construction assumes the system is stabilizable and the prior positive definite, but the systems we consider are unstabilizable, with only semidefinite priors. Without stabilizability, the original $Q$-function loses the upper bound, so we introduce a modified $Q$-function which is not upper bounded, but instead relies on the filter's linearity to recover global asymptotic stability. Sufficiency of C1 and C2 follows this way; necessity of detectability is known \cite{allan:rawlings:teel:2021}, and necessity of C2 is, so far as we know, new.

The condition C2 is also what separates this result from the covariance convergence theory. \citet{callier:winkin:1995} obtain covariance convergence under detectability and a strictly weaker prior condition---one that constrains only the antistable uncontrollable modes, those strictly outside the unit circle. Estimate error stability requires C2, which constrains the unit-circle modes as well. On a system whose prior misses a unit-circle mode, the covariance still converges to its strong-solution limit while the estimate error does not, so a discrete-time covariance theorem does not provide strong enough conditions for estimator stability. We have not found this characterization in the state estimation literature, in either continuous or discrete time. The optimization-based stability arguments of \citet{hager:horowitz:1976} provide useful precedent for the approach taken here and, together with the present result, suggest that the optimization problem itself is a natural framework for studying estimator stability. This viewpoint extends naturally to nonlinear optimization-based estimators, whose analysis provided the original motivation for this study.

We prove asymptotic stability here; the exponential case is deferred to a later paper. The reader should be aware that this asymptotic stability is not robust. For an unstabilizable system with an uncontrollable mode on the unit circle, a bounded disturbance that decays to zero can still cause the estimate error to diverge, even under perfect measurement as shown in Example 2 of \cref{sec:examples}. Robustness requires exponential stability, which holds once the uncontrollable unit-circle modes are excluded---the subject of a subsequent paper.

\section{Discussion of analysis tools}

\begin{remark}[Lyapunov functions]
Note that the popular time-varying Lyapunov function candidate for estimate
error $V_k(\ehat) \eqbyd \ehat'\Sigma(k)^{-1} \ehat$ introduced by \cite{deyst:price:1968} for observable and controllable
systems is not a Lyapunov function for the case
considered in \cref{prop:tvkf} as we demonstrate in the Example 1 of \cref{sec:examples}.

When generalizing to detectable and stabilizable systems,
\cite{anderson:moore:1981} briefly consider the candidate\footnote{Note:
  $\Sigma(k)$ and not $\Sigma(k)^{-1}$ as in   \citep{deyst:price:1968}.}
$V_k(\ehat) \eqbyd  \ehat'\Sigma(k) \ehat$ and 
comment that this candidate ``is not necessarily 
positive definite, and in fact it is easy to construct examples where it fails
to be positive definite; another difficulty is that the monotone decreasing
property of $V$ along trajectories is not strict.''
\end{remark}

Deprived of Lyapunov functions for general estimation problems, previous
stability analysis has generally focused instead on properties of the DRE
itself, such as conditions that ensure monotonicity in the Riccati iteration.
But these approaches lead to complex algebraic manipulations and conservative
restrictions on the initial variance $\Sigma_0$ that we wish to remove.  In the
following, we avoid arguments based on algebraic DRE manipulations wherever
possible because these have no obvious extension to nonlinear systems. In
contrast, analysis based instead on the estimator's optimization problem often
is extendable to nonlinear systems.

\begin{remark}[Lyapunov-like function]
Without a Lyapunov function, we might try instead for a $Q$-function
as in \cite{allan:rawlings:2019b}. For stabilizable systems, that approach
readily establishes exponential stability of the estimator in exactly the same
fashion as a Lyapunov function.  But for unstabilizable systems, we do not have
the required upper bound for the $Q$-function.  We address that issue in
\cref{sec:Quns}. 
\end{remark}

 \subsection{Optimization problems.}
As a general analysis approach we find it convenient to study the optimization problems that
generate identical solutions to the Kalman filter, i.e., the famous (or infamous)
connection of optimal statistical estimation and recursive least squares.
For example, the time-varying Kalman filter with initial variance $\Sigma_0 > 0$ has a
one-to-one correspondence with the following optimization problem
\cite[pp.287-288]{rawlings:mayne:diehl:2020} cf. \cite[pp.138-139]{sorenson:1970b}, \cite[pp.205-207]{jazwinski:1970}, \cite[p.227]{anderson:1973}.
\begin{gather*}
\min_{\chi(0), \omegaseq_T} V_T \eqbyd \ell_x(\chi(0)-\ox_0)
 + \sum_{k=0}^{T-1} \ell(\omega(k), \nu(k))\\
\text{subject to~} \chi^+ = A \chi + G \omega, \quad y = C\chi + \nu\\
\ell_x(e) \eqbyd \tfrac12 \norm{e}^2_{\Sigma_0^{-1}},\,\quad\ell(w,v) \eqbyd \tfrac12 (\norm{w}^2_{Q^{-1}} + \norm{v}^2_{R^{-1}})
\end{gather*}
$y(k),\,k = 0, 1, \ldots, T-1$ are the measurements, $\ox_0$ is the prior
estimate of the initial state, and
$\omegaseq_T \eqbyd (\omega(0), \ldots, \omega(T-1))$.
We denote the optimal state estimate trajectory as $\xhat(j\mid T)$ with $0 \leq
j \leq T$, $T \geq 0$, and the final element of the optimal trajectory  as
$\xhat(T) \eqbyd \xhat(T\mid T)$. Similarly, $\what(j\mid T)$ and $\vhat(j\mid T)$ denote the optimal process- and measurement-noise trajectories, collected as
$\whatseq_T \eqbyd (\what(j \mid T))_{j\in[0:T-1]}$.

In the case of interest here, the initial covariance is only semidefinite,
the initial condition is distributed as a \textit{singular} normal, and the standard
objective function is not well defined since $\Sigma_0^{-1}$ does not exist. To
handle semidefinite initial covariance, first denote $\Sigma_0$'s SVD  and
pseudo-inverse $\Sigma_0^\dagger$ by 
\begin{align}
\Sigma_0 &= \begin{bmatrix} U_1 & U_2 \end{bmatrix} \begin{bmatrix} \tSigma_0 & \\
  & 0 \end{bmatrix} \begin{bmatrix} U_1' \\ U_2' \end{bmatrix} = U_1 \tSigma_0
  U_1' \label{eq:SigmaSVD}\\
\Sigma_0^\dagger &= U_1 \tSigma_0^{-1}U_1' \notag
\end{align}
where $U \eqbyd [\,U_1\ U_2\,] \in \bbR^{n\times n}$ is orthogonal,
$\tSigma_0 \in \bbR^{r \times r}$, and $r \geq 0$ is the rank of $\Sigma_0$.
Then consider $\Sigma_\rho \eqbyd \Sigma_0 + \rho I_n$ where $\Sigma_0 \geq 0$ and $\Sigma_\rho > 0$ for $\rho > 0$, and note that
\begin{equation*}
\lim_{\rho \searrow 0} \norm{x-\ox_0}^2_{\Sigma_\rho^{-1}} =
\begin{cases}
\norm{x-\ox_0}^2_{\Sigma_0^\dagger}, & \; U_2'(x - \ox_0) = 0 \\
\infty, & \; U_2'(x - \ox_0) \neq 0 
\end{cases} 
\end{equation*}   
Therefore, to obtain a well-defined, bounded objective function in the semidefinite limit, we redefine
\[
\ell_x(e) \eqbyd \tfrac12\norm{e}^2_{\Sigma_0^\dagger}
\]
and add the equality constraint.\footnote{Note that this required equality constraint was accidentally omitted in some of the early state estimation literature on the semidefinite case.}
\begin{lemma}[Semidefinite initial variance]
\label{lem:semiPT}
The time-varying Kalman filter with semidefinite initial variance $\Sigma_0 \geq
0$  has a one-to-one correspondence with the optimization problem $\bbP_T$ defined as
\begin{gather*}
\min_{\chi(0), \omegaseq_T} V_T \eqbyd 
\ell_x(\chi(0)-\ox_0)
 + \sum_{k=0}^{T-1} \ell(\omega(k),\nu(k))\\
\text{subject to~} \chi^+ = A \chi + G \omega, \quad y = C\chi + \nu,\\
U_2'(\chi(0)-\ox_0) = 0
\end{gather*}
\end{lemma}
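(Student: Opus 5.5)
Since the positive-definite correspondence is already available from the cited references, I would obtain the semidefinite case by a limiting argument in $\rho$. For each $\rho>0$, the prior $\Sigma_\rho = \Sigma_0+\rho I_n$ is positive definite. The cited references then give a one-to-one correspondence between the Kalman filter started from $\Sigma_\rho$ and the unconstrained problem with cost $\ell_x^\rho(e)=\tfrac12\norm{e}^2_{\Sigma_\rho^{-1}}$. The plan has two parts. First, the filter quantities depend continuously on $\rho$ down to $\rho=0$. Second, the optimizers of the $\rho$-problems converge to the optimizer of $\bbP_T$.

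\textbf{Filter side.} With $R>0$, the term $C\Sigma C'+R$ is always invertible. Hence each step of \cref{eq:dre}, and each filter gain and estimate update, is a continuous function of $\Sigma(0)$ on the closed cone $\Sigma(0)\ge 0$. By induction on $k$, the quantities $\Sigma_\rho(k)$ and $\xhat_\rho(k)$ converge to the filter quantities generated with $\Sigma_0$ as $\rho\searrow 0$, for every fixed $T$.

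\textbf{Optimization side.} I would change coordinates $\chi(0)=\ox_0+U_1 a+U_2 b$. In these coordinates $\Sigma_\rho^{-1}=U\,\mathrm{diag}\big((\tSigma_0+\rho I)^{-1},\rho^{-1}I\big)U'$. The $\rho$-cost therefore reads
\[
\tfrac12\norm{a}^2_{(\tSigma_0+\rho I)^{-1}}+\tfrac1{2\rho}\norm{b}^2+\textstyle\sum_k \ell(\omega(k),\nu(k)),
\]
which is a strictly convex quadratic in $(a,b,\omegaseq_T)$. I would argue as follows.
\begin{enumerate}
\item $\bbP_T$ has a unique minimizer. On the feasible set $b=0$, its cost $V_T$ is a positive definite quadratic in $(a,\omegaseq_T)$, because $\tSigma_0>0$ and $Q>0$.
\item Denote the $\rho$-minimizers by $z_\rho=(a_\rho,b_\rho,\omegaseq_\rho)$. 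Comparing with the feasible point $z_0$ (the $\bbP_T$ minimizer, with $b=0$) shows that the $\rho$-cost at $z_\rho$ is bounded above by $V_T(z_0)$ plus a term that vanishes as $\rho\to0$. This yields two facts: $z_\rho$ is bounded, and $\norm{b_\rho}^2\le 2\rho\,(\text{const})\to 0$.
\item Take any limit point $\bar z$ of $z_\rho$. It is feasible for $\bbP_T$. By lower semicontinuity, $V_T(\bar z)\le V_T(z_0)$. Uniqueness then forces $\bar z=z_0$, so the whole family converges.
\end{enumerate}

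\textbf{Combining the two sides.} Any fixed linear map of $z_\rho$, in particular $\xhat_\rho(T\mid T)$, converges to the corresponding quantity of $\bbP_T$. Since $\xhat_\rho(T)$ coincides with the filter estimate for every $\rho>0$, the two limits agree. The same argument applies to the whole smoothed trajectory.

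The main obstacle is step 2, which establishes that the penalty $\rho^{-1}\norm{b}^2$ drives $b_\rho\to0$ while the rest of the cost stays bounded. The uniform bound comes from evaluating the $\rho$-cost at $z_0$, giving
\[
\tfrac12\norm{a_0}^2_{(\tSigma_0+\rho I)^{-1}}+\textstyle\sum_k\ell\le V_T(z_0).
\]
That bound must be inherited by $z_\rho$, and then the $(a,\omegaseq)$-part must be shown coercive uniformly for $\rho\le1$. Coercivity holds because $(\tSigma_0+\rho I)^{-1}\ge(\tSigma_0+I)^{-1}>0$.

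If one wants to avoid limits altogether, an alternative is to verify directly that the stationarity (KKT) conditions of $\bbP_T$ produce the filter recursion, with the multiplier on $U_2'(\chi(0)-\ox_0)=0$ playing the role of the missing inverse. I expect this route to be messier.
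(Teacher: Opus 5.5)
Your proposal is correct and follows the same route as the paper. The paper also regularizes the prior with $\Sigma_\rho = \Sigma_0 + \rho I_n$ and passes to $\rho \searrow 0$. It records only the pointwise limit of $\norm{\chi(0)-\ox_0}^2_{\Sigma_\rho^{-1}}$, giving the $\Sigma_0^\dagger$ penalty on the constraint set and $\infty$ off it, and then states the lemma without further proof.

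Your two steps fill in what the paper leaves implicit, and both are needed. The first is continuity of the Riccati recursion and the estimate update in $\Sigma(0)$ down to the boundary of the semidefinite cone, which holds because $C\Sigma C'+R \geq R > 0$. The second is convergence of the minimizers of the $\rho$-problems, via the uniform bound, $b_\rho \to 0$, lower semicontinuity and uniqueness. Pointwise convergence of the objectives would not by itself imply convergence of the minimizers.
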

At the edge cases, if $\Sigma_0>0$, the constraint with $U_2$ is deleted, and we
have the standard optimization problem. If $\Sigma_0=0$, the initial state term in
$V_T$ is identically zero, and the constraint becomes $\chi(0)=\ox_0$ since $U_2$ has
rank $n$.  Next we summarize some established properties of the solution
to $\bbP_T$ \cite[pp. 277--280]{rawlings:mayne:diehl:2020}. 
\begin{lemma}[Existence and Uniqueness of Solution to $\bbP_T$]
\label{lem:exist}
The optimal value $V_T^0$ of $\bbP_T$ exists and its minimizer is unique for all $A,G,C$ and $Q>0$, $R>0$, $\Sigma_0\geq0$.
\end{lemma}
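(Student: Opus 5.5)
The plan is to recast $\bbP_T$ as a strictly convex quadratic program with linear equality constraints, and then use feasibility, coercivity of the cost on the feasible set, and strict convexity.

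\textbf{Eliminating the dependent variables.} The state sequence is determined by $(\chi(0),\omegaseq_T)$ through $\chi^+ = A\chi + G\omega$. The measurement residuals are then $\nu(k) = y(k) - C\chi(k)$. So the decision variables are $z \eqbyd (\chi(0),\omegaseq_T)$, and the only remaining constraint is the linear equality $U_2'(\chi(0)-\ox_0)=0$.

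\textbf{Parametrizing the feasible set.} The feasible set is nonempty, since it contains $\chi(0)=\ox_0$. Every feasible initial state has the form $\chi(0) = \ox_0 + U_1 \eta$ with $\eta\in\bbR^r$, and $U_1$ has orthonormal columns, so this map is injective. In the new variables $(\eta,\omegaseq_T)$ the problem is unconstrained. Using \cref{eq:SigmaSVD}, the prior term becomes
\[
\ell_x(U_1\eta) = \tfrac12 \eta'\tSigma_0^{-1}\eta .
\]
The process-noise terms contribute $\tfrac12\sum_k \norm{\omega(k)}^2_{Q^{-1}}$. Each measurement term $\tfrac12\norm{\nu(k)}^2_{R^{-1}}$ is a convex quadratic, since it is affine in $(\eta,\omegaseq_T)$ composed with a positive definite form.

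\textbf{Strict convexity and coercivity.} Because $\tSigma_0>0$ and $Q>0$, the Hessian of $V_T$ in $(\eta,\omegaseq_T)$ is at least $\mathrm{diag}(\tSigma_0^{-1}, I_T\otimes Q^{-1}) > 0$. Hence $V_T$ is a strictly convex quadratic with positive definite Hessian. It is bounded below by zero and coercive, so a minimizer exists; it is the unique solution of the linear normal equations. This holds for every $A,G,C$, since no structural assumption entered.

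\textbf{Uniqueness in the original variables.} The minimizer in $(\eta,\omegaseq_T)$ maps injectively back to a unique $(\chi(0),\omegaseq_T)$. That point determines unique optimal trajectories $\xhat(\cdot\mid T)$, $\what(\cdot\mid T)$ and $\vhat(\cdot\mid T)$. The optimal value $V_T^0$ is finite and nonnegative.

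\textbf{Edge cases.} If $r=0$ (i.e.\ $\Sigma_0=0$), $\eta$ is absent and the Hessian reduces to the $Q^{-1}$ blocks, which are still positive definite. If $T=0$, the only term is $\tfrac12\eta'\tSigma_0^{-1}\eta$, minimized uniquely at $\eta=0$.

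\textbf{Main point of care.} Uniqueness must not be claimed from the original variables: $\Sigma_0^\dagger$ is singular, so the cost is not strictly convex in $\chi(0)$ over all of $\bbR^n$. It is the equality constraint, handled by the reparametrization through $U_1$, that restores positive definiteness. Everything else is routine.
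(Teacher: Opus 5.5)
Your proof is correct and follows essentially the same route as the paper, which uses the orthogonal change of variables $\alpha = U'(\chi(0)-\bar{x}_0)$ with the constraint forcing $\alpha_2 = 0$ (your $\eta$ is its $\alpha_1$). The paper then gets existence from continuity and radial unboundedness in $\alpha_1$ and the process-noise sequence, and uniqueness from strict convexity, because the prior and process-noise terms give a positive definite second-degree part and the output terms are convex. Your explicit Hessian lower bound and your remark that strict convexity fails in the original variable $\chi(0)$ are just a more detailed rendering of that argument.
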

\begin{proof}
Note that the cost function $V_T(\chi(0), \omegaseq_T)$ is a continuous function
of its arguments. Given $Q>0$, $V_T$ is radially unbounded in the $\omegaseq_T$
argument.  Since $\Sigma_0^\dagger$ is only semidefinite, consider the
variable transformation $\alpha = U' (\chi(0)-\ox_0)$, partitioned
conformably with $U = [\,U_1\ U_2\,]$ as $\alpha = (\alpha_1,\alpha_2)$, which gives 
\begin{equation*}
\norm{\chi(0)-\ox_0}^2_{\Sigma_0^\dagger} =
\norm{\alpha_1}^2_{\tSigma_0^{-1}}
\end{equation*}
and the constraint in $\bbP_T$ gives $\alpha_2 = 0$.  We then have that $V_T$ is
radially unbounded
in its free variable, $\alpha_1$, since $\tSigma_0 >0$, and $\alpha_2=0$. So
$V_T$ can be considered a continuous function of $(\alpha_1, \omegaseq_T)$, and
is radially unbounded in this argument, so the  solution to $\bbP_T$ exists for all
$T \geq 0$.
Moreover, in the variables $(\alpha_1,\omegaseq_T)$ the cost is a quadratic function whose second-degree terms include $\norm{\alpha_1}^2_{\tSigma_0^{-1}}$ and $\sum_j\norm{\omega(j)}^2_{Q^{-1}}$ with $\tSigma_0>0$ and $Q>0$, while the output terms contribute nonnegative second-degree terms since each residual is affine in $(\alpha_1,\omegaseq_T)$; hence $V_T$ is strictly convex and the minimizer is unique.
\end{proof}
Note that we have not made any assumptions about detectability or
stabilizability for existence of solution to problem $\bbP_T$.

\subsection{Linear systems}
As we specialize to estimation results that hold only for linear systems, we
finally resort to using properties of linear systems that do \textit{not} extend
in any obvious way to nonlinear systems. 
For the time-varying Kalman filter, we  have the following time-varying linear
system for the estimate error $\ehat$
\begin{equation}
\ehat(k+1) = (A-L(k)C)\ehat(k)
\label{eq:epssys}
\end{equation}
with initial condition $\ehat(0) = x(0)-\ox_0$ and the time-varying
estimator  gains 
\begin{equation}
L(k) \eqbyd A \Sigma(k) C'(C \Sigma(k) C' + R)^{-1}
\label{eq:Lk}
\end{equation}
The solution is then 
\begin{align}
\ehat(k) &= M(k) \ehat(0) \label{eq:epsk}\\
M(k) &= \big(A-L(k-1)C\big) \cdots \big(A-L(0)C\big),\, M(0) = I_n
\label{eq:Mk}
\end{align}

In the subsequent analysis of the estimator's \textit{nominal} stability properties, we assume that the measurements are coming from the nominal linear system without any disturbances, i.e.
\begin{equation}
y(k) = C A^k x(0) 
\label{eq:nommeas}
\end{equation}

Next we show that the cost function is uniformly bounded and converges.
\begin{lemma}[Uniformly bounded value function and convergence]
\label{lem:unibounded}
Given C2 and nominal measurements, the solution of $\bbP_T$ exists and its optimal
value is uniformly bounded above for all $T \geq 0$, i.e., 
there exists $c_v \geq 0$ such that the following holds
for all $x(0), \ox_0 \in \bbR^n$ and $T \geq 0$
\begin{equation*}
V_T^0 \leq c_v \norm{x(0) - \ox_0}^2
\end{equation*}
The sequence $\big(V_T^0\big)_{T \geq 0}$ converges, $V_\infty^0$ exists, and
\begin{equation}
\lim_{T \rightarrow \infty}  
\ell(\what(T \mid T+1), \vhat(T \mid T+1))
= 0
\label{eq:ellT}
\end{equation}
\end{lemma}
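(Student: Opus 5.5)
We bound $V_T^0$ by the cost of an explicit feasible candidate for $\bbP_T$, chosen independently of $T$. With nominal measurements $y(k)=CA^kx(0)$, any feasible $(\chi(0),\omegaseq_T)$ whose trajectory reproduces the true state eventually gives zero output residuals. The plan is to choose $\chi(0)$ and a noise sequence $\omega(\cdot)$ that steer the error $x(k)-\chi(k)$ to zero in finitely many steps. The cost is then a fixed quadratic form in $e_0\eqbyd x(0)-\ox_0$ that does not depend on $T$.

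\textbf{Choice of $\chi(0)$ via C2.} Work in the coordinates of \cref{eq:contrb}. The error $\epsilon(k)\eqbyd x(k)-\chi(k)$ satisfies $\epsilon^+=A\epsilon-G\omega$. Its unstabilizable part evolves as $\epsilon_2^+=A_2\epsilon_2$, with no input entering, so the only way to kill it is $\epsilon_2(0)=0$.

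So I need $\chi(0)=\ox_0+\delta$ with:
\begin{itemize}
\item[(i)] $U_2'\delta=0$, i.e.\ $\delta\in\operatorname{range}\Sigma_0$;
\item[(ii)] the component of $e_0-\delta$ in the unstable uncontrollable subspace is zero.
\end{itemize}
Let $P$ denote the linear map extracting the $x_2$-coordinates. Requirement (ii) reads $P\delta=Pe_0$. Condition C2 says $\operatorname{range}\Sigma_0\cap \ker$ of the complementary projection is trivial. Equivalently, the restriction of $P$ to $\operatorname{range}\Sigma_0$ is onto. Making this identification precise (uncontrollable unstable subspace versus the $x_2$ block, using orthogonality of the canonical form) is the step I expect to be the main obstacle. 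It is where C2 is used, and it needs care since C2 concerns the intersection of a null space with a subspace, not a projection.

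Granting surjectivity, I take $\delta=Ke_0$ with $K$ a fixed right inverse (e.g.\ the minimum-norm solution), so $\norm{\delta}\le c_1\norm{e_0}$. Then $\ell_x(\delta)\le c_2\norm{e_0}^2$, using $\norm{\delta}^2_{\Sigma_0^\dagger}\le \norm{\tSigma_0^{-1}}\norm{\delta}^2$.

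\textbf{Choice of $\omega(\cdot)$.} After this choice $\epsilon_2(0)=0$, so $\epsilon_2(k)=0$ for all $k$. The remaining dynamics are $\epsilon_1^+=A_1\epsilon_1-G_1\omega$. Since $(A_1,G_1)$ is stabilizable, pick $F$ with $A_1-G_1F$ Schur and set $\omega=F\epsilon_1$. Then $\epsilon_1(k)=(A_1-G_1F)^k\epsilon_1(0)$ decays geometrically. (Deadbeat is unavailable without controllability, but geometric decay suffices.)

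The measurement residual is $\nu(k)=C\epsilon(k)=C_1\epsilon_1(k)$. Hence
\[
\sum_{k=0}^{T-1}\ell(\omega(k),\nu(k))\le \tfrac12\sum_{k=0}^{\infty}\norm{\epsilon_1(k)}^2_{F'Q^{-1}F+C_1'R^{-1}C_1}\le c_3\norm{\epsilon_1(0)}^2 ,
\]
where the infinite sum is bounded by a Lyapunov equation for $A_1-G_1F$. Also $\norm{\epsilon_1(0)}\le(1+c_1)\norm{e_0}$. Combining the two parts gives $V_T^0\le c_v\norm{e_0}^2$ with $c_v$ independent of $T$, $x(0)$ and $\ox_0$.

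\textbf{Convergence and \cref{eq:ellT}.} $V_T^0$ is nondecreasing in $T$. To see this, restrict the optimal solution of $\bbP_{T+1}$ to its first $T$ stages; this is feasible for $\bbP_T$ with no larger cost, since the dropped stage cost is nonnegative. A monotone bounded sequence converges, so $V_\infty^0$ exists.

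For \cref{eq:ellT}, the same restriction argument gives
\[
V_T^0\le V_{T+1}^0-\ell(\what(T\mid T+1),\vhat(T\mid T+1)),
\]
because the truncated solution of $\bbP_{T+1}$ costs exactly $V_{T+1}^0$ minus its last stage. Therefore
\[
0\le \ell(\what(T\mid T+1),\vhat(T\mid T+1))\le V_{T+1}^0-V_T^0\to0 .
\]
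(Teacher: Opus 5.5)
\textbf{Gap: the step where C2 is used is missing.} Your plan is the same as the paper's. You choose a feasible $\chi(0)$ with $\chi_2(0)=x_2(0)$, so the unstabilizable error block vanishes identically. You steer the $(A_1,G_1)$ error with a stabilizing feedback whose infinite-horizon cost is a fixed quadratic in $x(0)-\ox_0$. You then get convergence and \eqref{eq:ellT} from $V_T^0\le V_{T+1}^0-\ell(\what(T\mid T+1),\vhat(T\mid T+1))$. Those parts are correct; existence of the minimizers you truncate comes from \cref{lem:exist}. The problem is the only step that uses the hypothesis. You write ``granting surjectivity'' and never show that C2 makes your extraction map $P=E_2'$, with $E_2=[\,0\ \ I_{n_2}\,]'$, carry $\mc{R}(\Sigma_0)$ onto $\bbR^{n_2}$. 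Your reformulation of C2 is also wrong. C2 is $\ker\Sigma_0\cap\mc{R}(E_2)=\{0\}$, a condition on the null space, but you wrote $\mc{R}(\Sigma_0)\cap\mc{R}(E_2)=\{0\}$. With the range in place of the null space, the asserted equivalence fails. For $\Sigma_0=0$ and $n_2\ge1$ the range condition holds vacuously, yet $\delta$ is forced to be $0$, so $P\delta=Pe_0$ has no solution whenever $Pe_0\neq0$.

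\textbf{How to close it.} Requirement (i) says $\delta=U_1\alpha_1$. Requirement (ii) then reads $E_2'U_1\alpha_1=E_2'e_0$, which is solvable for every $e_0$ iff $E_2'U_1$ has full row rank, i.e., iff $U_1'E_2\beta=0$ forces $\beta=0$. Since $U$ is orthogonal, $U_1'E_2\beta=0$ means $E_2\beta\in\mc{R}(U_1)^\perp=\mc{R}(U_2)=\ker\Sigma_0$. C2 then gives $E_2\beta=0$, hence $\beta=0$. Taking $\alpha_1=(E_2'U_1)^\dagger E_2'e_0$ yields your linear bound $\norm{\delta}\le\norm{(E_2'U_1)^\dagger}\norm{e_0}$. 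The paper reaches the same conclusion in the complementary parametrization. There, C2 is equivalent to linear independence of the columns of $[\,U_2\ E_2\,]$, hence to full column rank of the upper block $U_{12}$ of $U_2$. So the constraint $U_{12}'(\chi_1(0)-\ox_{10})=-U_{22}'(x_2(0)-\ox_{20})$ is solvable through $(U_{12}')^\dagger$. With this step supplied, the rest of your argument goes through unchanged.
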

\begin{proof}
For $\Sigma_0 > 0$, a uniform upper bound for $V^0_T$ for all $T\geq 0$ is immediate using the feasible decision variables $\chi(0) = x(0), \omegaseq_T = 0$,  yielding $V_T^0 \leq \tfrac12\norm{x(0)-\ox_0}_{\Sigma_0^{-1}}^2$.
Using \cref{eq:quadbound} of \cref{sec:proofs} then gives $V_T^0 \leq c_v \norm{x(0)-\ox_0}^2$ with $c_v = \tfrac12\olambda(\Sigma_0^{-1})$.
So the first issue is to extend this easy upper bound to handle the semidefinite case $\Sigma_0 \geq 0$ under condition C2.  To do that we require extensive use of properties of linear systems, so we defer that part of the proof to \cref{sec:proofs}.

To establish convergence, note that the optimal solution of the estimation problem at $T+1$ gives feasible, but possibly suboptimal, decision variables at $T$, so we have the inequality
\begin{equation*}
  V_T^0 \leq V_{T+1}^0 -
\ell(\what(T \mid T+1), \vhat(T \mid T+1))
\end{equation*}
Therefore the sequence $\big(V_T^0\big)_{T \geq 0}$ is nondecreasing and bounded
above, and hence converges, so $V_\infty^0$ exists. Rearranging the inequality gives
\begin{equation*}
\ell(\what(T \mid T+1), \vhat(T \mid T+1))
\leq V_{T+1}^0 - V_T^0  
\end{equation*}
and the convergence of sequence $(V_T^0)$ implies that the right-hand side converges to zero. Since $\ell(\cdot) \geq 0$, so does the left-hand side, and the result is proven.
\end{proof}

\section{Unstabilizable systems and $Q$-functions}
\label{sec:Quns}

To establish stability of the optimal linear estimator for \textit{unstabilizable} systems, we require a modified $Q$-function adapted from \citet{allan:rawlings:2019b}. We will find the following form useful for this purpose.
\begin{definition}[Modified $Q$-function]
\label{def:modQ}
A function $Q(j~\mid~k)$ is a modified $Q$-function 
if there exist $\mc{K}_\infty$-functions $\mu_0,\mu_1, \mu_3$ such that  
\begin{align}
& \mathrel{\phantom{\leq}} Q(0 \mid k) \leq \mu_0(\norm{x(0)-\ox_0}) \label{eq:QunsInitUB} \\
\mu_1(\norm{x(j)-\xhat(j\mid k)}) &\leq Q(j \mid k) \label{eq:QunsLBUB} \\
Q(j+1\mid k) &\leq Q(j\mid k) - \mu_3(\norm{x(j)-\xhat(j\mid k)}) \label{eq:QunsDecrease}
\end{align}

for all $x(0)$, $\ox_0$, and $k\in\bbI_{\geq0}$, with $0\leq j\leq k$ in \cref{eq:QunsLBUB} and $0\leq j\leq k-1$ in \cref{eq:QunsDecrease}.
\end{definition}
It would seem at first glance that the loss of the upper bound in \cref{eq:QunsLBUB} would prevent concluding even asymptotic stability, but we will see that the \textit{linearity} of the time-varying Kalman filter saves the situation, and we can establish global asymptotic stability (GAS). 

\begin{proposition}[Modified $Q$-function and GAS]
\label{prop:modQgas}
Suppose C1 and C2 hold. If the optimal linear time-varying state estimator admits a modified
$Q$-function for which the horizon limits
$Q(j\mid\infty)\eqbyd \lim_{k\tends\infty}Q(j\mid k)$ exist for all $j\ge0$, then it is GAS.
\end{proposition}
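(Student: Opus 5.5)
We will prove GAS in two parts: global attractivity, $\ehat(k)\to0$ for every $x(0),\ox_0$, and Lyapunov stability. Linearity of the error system \cref{eq:epsk} lets us get the second from the first.

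\textbf{Attractivity.} Fix $j\ge0$ and $k\ge j$. Summing \cref{eq:QunsDecrease} from $i=0$ to $i=k-1$ and using \cref{eq:QunsLBUB} at $j=k$ (so $Q(k\mid k)\ge0$), together with \cref{eq:QunsInitUB}, gives
\begin{equation*}
\sum_{i=0}^{k-1}\mu_3(\norm{x(i)-\xhat(i\mid k)}) \le Q(0\mid k) \le \mu_0(\norm{x(0)-\ox_0}).
\end{equation*}
This bounds the smoothed errors, not the filtered error $\ehat(k)=x(k)-\xhat(k\mid k)$. To reach the filtered error we will use the existence of the limits $Q(j\mid\infty)$. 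Passing to the limit $k\to\infty$ in \cref{eq:QunsDecrease} and \cref{eq:QunsLBUB} gives $Q(j+1\mid\infty)\le Q(j\mid\infty)-\mu_3(\cdot)$ along the limiting smoothed sequence, where the smoothed estimates $\xhat(j\mid k)$ converge as $k\to\infty$ for each fixed $j$. This convergence follows from linearity plus the uniform cost bound of \cref{lem:unibounded}. Hence $Q(j\mid\infty)$ is nonincreasing and bounded below by $0$, so it converges. The increments then tend to zero, giving $\lim_{j\to\infty}\norm{x(j)-\xhat(j\mid\infty)}=0$.

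It remains to link the filtered estimate to the limiting smoothed estimate. We will write $\xhat(j\mid k)=\xhat(j\mid j)+\Delta(j,k)$. Using the optimality/feasibility argument behind \cref{eq:ellT}, the correction $\Delta$ is controlled by the stage costs $\ell(\what,\vhat)$ of later stages, which tend to zero. Combined with C1, detectability of $(A,C)$, this should let us conclude that $\ehat(j)\to0$.

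\textbf{Stability.} By \cref{eq:epsk}, $\ehat(k)=M(k)\ehat(0)$ with $M(k)$ independent of $x(0),\ox_0$. Attractivity for every initial error, applied to the basis vectors $\ehat(0)=e_i$, gives $M(k)e_i\to0$ for each $i$, hence $M(k)\to0$. A convergent sequence of matrices is bounded, so $\oM\eqbyd\sup_k\norm{M(k)}<\infty$. Then $\norm{\ehat(k)}\le\oM\norm{\ehat(0)}$ is a uniform class-$\mc{K}$ bound, and in fact $\norm{\ehat(k)}\le\norm{M(k)}\norm{\ehat(0)}$ is a $\mc{KL}$ bound. This is exactly how linearity replaces the missing upper bound in \cref{eq:QunsLBUB}: we never need a $Q(j\mid k)\le\mu_2(\norm{\cdot})$ estimate, because uniformity comes from the finite dimensionality of the initial-error space.

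\textbf{Main obstacle.} The hard step is the first one: passing from decrease of $Q(\cdot\mid k)$ along smoothed trajectories to convergence of the filtered error $x(k)-\xhat(k\mid k)$. The lower bound \cref{eq:QunsLBUB} holds only for $j\le k$. The limit $Q(j\mid\infty)$ yields convergence of the smoothed error at fixed lag as $j\to\infty$, not along the diagonal $j=k$. I would first try to bound $\norm{\xhat(k\mid k)-\xhat(k\mid\infty)}$ using \cref{eq:ellT} and a uniform-in-$k$ detectability estimate over a window of length $n$. If that fails, I would instead use $Q(k\mid k)\le Q(j\mid k)$ for $j\le k$ and then let $k\to\infty$ before $j$ to extract the diagonal limit.
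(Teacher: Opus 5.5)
There is a genuine gap, and it is the step that makes the statement true: you never show that the filtered error $\ehat(k)=x(k)-\xhat(k\mid k)$ tends to zero. Your outline otherwise matches the paper's: pass to the limit $Q(j\mid\infty)$ to get $\xhat(j\mid\infty)\tends x(j)$, link the filtered estimate to that limit, then use linearity of $M(k)$ for uniformity. But the sentence ``$\Delta$ is controlled by the stage costs of later stages \ldots\ this should let us conclude'' is only a hope. The convergence of $\xhat(j\mid k)$ in $k$, which you attribute to ``linearity plus the uniform cost bound,'' is likewise only asserted.

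The paper gets both from \cref{prop:infhor}. The missing idea is the quadratic-gap identity \eqref{eq:gap}. $V_T$ is quadratic in $(\alpha_1,\omegaseq_T)$ with Hessian dominating $\operatorname{diag}(\tSigma_0^{-1},Q^{-1},\ldots,Q^{-1})$, and the truncation of the horizon-$p$ optimizer is feasible for $\bbP_m$. So for $p>m$
\[
\tfrac12\Bigl(\norm{\hat\alpha_{1,p}-\hat\alpha_{1,m}}_{\tSigma_0^{-1}}^2+\sum_{j=0}^{m-1}\bigl(\norm{\what(j\mid p)-\what(j\mid m)}_{Q^{-1}}^2+\norm{\vhat(j\mid p)-\vhat(j\mid m)}_{R^{-1}}^2\bigr)\Bigr)\le V_p^0-V_m^0 .
\]
Because $(V_T^0)$ is nondecreasing and bounded (\cref{lem:unibounded}, where C2 enters), the right side tends to zero as $m\tends\infty$, uniformly in $p>m$. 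This makes the optimizers Cauchy in the horizon, which gives the limits you asserted. After letting $p\tends\infty$, sum \eqref{eq:iioss-dec} (where C1 enters) for the difference trajectory over $[0,T]$; this gives $\norm{\xhat(T\mid T)-\xhat(T\mid\infty)}^2\le c\,(V_\infty^0-V_T^0)\tends0$. What bounds $\Delta$ is the value increment $V_p^0-V_m^0$, spread by the gap identity over the \emph{whole} window $[0,m-1]$. The later stage costs do not bound it, and the feasibility inequality behind \eqref{eq:ellT} alone does not give this.

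Neither fallback in your last paragraph closes the gap as stated.
\begin{itemize}
\item Letting $k\tends\infty$ before $j$ in $\mu_1(\norm{\ehat(k)})\le Q(k\mid k)\le Q(j\mid k)$ yields only $\limsup_k\mu_1(\norm{\ehat(k)})\le\lim_j Q(j\mid\infty)$. \Cref{def:modQ} does not force that limit to be zero: adding $\norm{x(0)-\ox_0}^2$ to a modified $Q$-function gives another one.
\item The window route fails as written. \Cref{eq:ellT} controls a single stage, and a fixed window of length $n$ does not bound the unobservable (stable) part of the error for a merely detectable pair.
\end{itemize}

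The window route can, however, be repaired into a genuinely different proof. Take $A-LC$ Schur. Then $e(i)\eqbyd\xhat(i\mid k)-x(i)$ obeys $e^+=(A-LC)e+G\what(i\mid k)-L\vhat(i\mid k)$. The last $N$ stage costs of $\bbP_k$ sum to at most $V_k^0-V_{k-N}^0\tends0$. The error at the window start is at most $\mu_1^{-1}\circ\mu_0(\norm{x(0)-\ox_0})$ by \cref{eq:QunsInitUB,eq:QunsLBUB,eq:QunsDecrease}. Hence, for some $c>0$ and $\lambda<1$ and every $N$, $\limsup_k\norm{\ehat(k)}\le c\lambda^N\mu_1^{-1}\circ\mu_0(\norm{x(0)-\ox_0})$. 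This gives attractivity without using $Q(j\mid\infty)$ or \cref{prop:infhor}.

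Finally, in the stability part, $\norm{M(k)}$ need not be monotone. Use $\sup_{j\ge k}\norm{M(j)}$ as the $\mc{L}$-function, as the paper does.
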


\paragraph{Classical argument.}  To understand why this proposition might be valid,
consider first establishing the classical (weaker) two-part definition of GAS.
\begin{definition}[GAS estimation (classical)]
\label{def:classicalGAS}
An estimator is GAS (classical) if 
\begin{enumerate}
\item For every $\eps>0$, there exists $\delta>0$
such that for all $x(0)$ and $\ox_0$  satisfying $\norm{x(0) - \ox_0} \leq
\delta$ the following holds for all $k \geq 0$
\begin{equation*}
\norm{x(k)-\xhat(k)}  \leq \eps
\end{equation*}

\item 
For all $x(0)$ and $\ox_0$
\begin{equation*}
\lim_{k \tends \infty} \norm{x(k)-\xhat(k)}   = 0 
\end{equation*}
\end{enumerate}

\end{definition}

\begin{proof}[Proof (classical case)]
We can establish Uniform Global Stability, which is stronger than the local
stability (Lyapunov stability) required in classical GAS.  
From  repeated use of \cref{eq:QunsDecrease} we know that
$Q(j\mid k) \leq Q(0\mid k)$ for all $j\leq k, k \in \bbI_{\geq 0}$.  Therefore, applying $\mu_1^{-1}$ to
\cref{eq:QunsLBUB}
and substituting \cref{eq:QunsInitUB}, we have that
\begin{align*}
\norm{x(j)-\xhat(j\mid k)} &\leq \mu_1^{-1}(Q(j\mid k)) \leq \mu_1^{-1}(Q(0\mid k)) \\
 &\leq \mu_1^{-1}\circ \mu_0(\norm{x(0)-\ox_0})
\end{align*} 
for all $j \leq k$, $k \in \bbI_{\geq 0}$, $x(0), \ox_0 \in \bbR^n$, meeting
the definition of Uniform Global Stability, which implies local stability
\cite[Remark 7]{mcallister:rawlings:2020}. 

To address convergence, we require the existence of the solution to the
infinite horizon state estimation problem, established in
\cref{prop:infhor} of \cref{sec:proofs} under C1 and C2. The limit
$Q(j\mid\infty) \eqbyd \lim_{k\tends\infty}Q(j\mid k)$ exists for all
$j\geq0$ by hypothesis; by \cref{prop:infhor}.\ref*{it:zlim} and the
continuity of $\mu_1$, $\mu_3$, \cref{eq:QunsLBUB,eq:QunsDecrease} are
preserved in the limit.
\Cref{prop:infhor}.\ref*{it:xTT} further gives $\xhat(j\mid j) \tends \xhat(j\mid\infty)$ as $j\tends\infty$. With those facts in hand, we proceed as follows:
\cref{eq:QunsDecrease} implies that $Q(j\mid \infty)$ is a nonincreasing sequence that is
bounded below by zero, and therefore converges for all $x(0), \ox_0 \in \bbR^n$.
Rearranging \cref{eq:QunsDecrease} for $k=\infty$ gives
\begin{equation*}
\norm{x(j)-\xhat(j\mid \infty)} \leq \mu_3^{-1}(Q(j \mid \infty) - Q(j+1\mid \infty))
\end{equation*}
for all $j \geq 0$. 
Since $Q(j\mid \infty) - Q(j+1\mid \infty)$ converges to zero as $j \tends \infty$, we
have that $\xhat(j\mid \infty) \tends x(j)$ as $j \tends \infty$, and therefore
$\xhat(j \mid  j) \tends x(j)$ as $j \tends \infty$, and convergence has been established.
\end{proof}

To establish the stronger KL form of GAS, we require Uniform Global Attraction
in addition to the Uniform Global Stability established above
\cite[Proposition 19]{mcallister:rawlings:2020}. What we have 
shown so far is only Global Attraction.    But since the time-varying Kalman
filter is a \textit{linear} system, we can next readily establish that the global attraction
is also uniform and the time-varying Kalman filter is GAS.  

\begin{definition}[GAS estimation]
\label{def:GAS}
An estimator is GAS if  there exists a $\mc{KL}$-function $\beta(\cdot)$  such that 
\begin{equation*}
\norm{x(k)-\xhat(k)}  \leq \beta(\norm{x(0) - \ox_0}, k)
\end{equation*}
for all $x(0), \ox_0 \in \bbR^n$ and all $k \geq 0$.
\end{definition}

\begin{proof}[Proof (KL version)]
We first establish that $M(k)$ is uniformly bounded for all $k \geq 0$.  Consider first
the DRE \cref{eq:dre} with index $k = 0$. Since $\Sigma(0) \geq 0$ and $R>0$,
$(C\Sigma(0)C' +R)^{-1}$ is well defined and positive definite. We next note that
$\Sigma(0) - \Sigma(0)C'(C\Sigma(0)C' +R)^{-1}C\Sigma(0)$ remains positive
semidefinite because it is the variance of a state $x$ conditioned on a
measurement $Cx + v$ where $v$ has variance $R$. Multiplying from left and right
by $A$ and $A'$ and adding the semidefinite $GQG'$ shows that
$\Sigma(1)$ is also semidefinite.
Iterating this argument gives $\Sigma(k)$ is well defined and positive
semidefinite  for all $k \geq 0$.  Therefore $L(k)$ in \cref{eq:Lk} is well
defined, and therefore $M(k)$ in \cref{eq:Mk} is well defined for all $k \geq
0$.
Recall that $\ehat(k) = M(k) \ehat(0)$. 
Since we have shown that $\lim_{k\tends \infty} \ehat(k) = 0$
for all $\ehat(0)$, we can choose $\ehat(0)$ to be the unit vectors in
$\bbR^n$, $\ehat(0) = \ehat_i$, for $i =1, 2, \ldots, n$ to conclude that each column
of matrix $M(k)$ tends to zero and thus $\lim_{k\tends \infty} M(k) = 0$.
Therefore, choose an arbitrary $\eps>0$, and we have an integer $K(\eps)$ such that
$\norm{M(j)} \leq \eps $ for all $j \geq K(\eps)$.  To compute a uniform
upper bound for $M(k)$, note that
\begin{align*}
\sup_{j \geq 0} \norm{M(j)} &= \max (\max_{0
  \leq j \leq K-1} \norm{M(j)}, \sup_{j \geq K} \norm{M(j)})\\
 &\leq \max(\norm{M(j^*)},
\eps) \bydeq \oM
\end{align*} 
where $j^*$ is some integer $0 \leq j^* \leq K-1$. Since $M(j^*)$ and $\eps$
are some finite constants, $\oM$ is a finite uniform upper bound for
$\norm{M(k)}$ for all $k \geq 0$.

To establish GAS, define a function, $\alpha(\cdot)$
\begin{equation*}
\alpha(k) = \sup_{j \geq k}  \norm{M(j)}
\end{equation*}
which exists for all $k \geq 0$ because $\norm{M(j)}$ is bounded above for all $j \geq
0$, and converges to zero as $k \tends \infty$. 
We have that $\alpha(\cdot)$ is an $\mc{L}$-function because it is nonincreasing by construction and converges to
zero.  Then define the 
$\mc{KL}$-function $\beta(\cdot)$ as the product of  
$\mc{K}$- and $\mc{L}$-functions, $\beta(r,k) = r \alpha(k)$.
We then have that for all $\ehat(0) \in \bbR^n$ and $k \geq 0$,
$\norm{\ehat(k)} = \norm{M(k) \ehat(0)} \leq \norm{M(k)} \norm{\ehat(0)} \leq
\alpha(k) \norm{\ehat(0)} = \beta(\norm{\ehat(0)}, k)$
 and the time-varying Kalman filter is GAS. 
\end{proof}

Next we show that C1 and C2 are sufficient for existence of the modified $Q$-function and therefore GAS of the time-varying Kalman filter. 
\begin{proposition}
\label{prop:tvkfQuns}
The time-varying Kalman filter admits a modified $Q$-function if C1 and C2 hold. Moreover, the $\mc{K}_\infty$-functions $\mu_0(\cdot)$, $\mu_1(\cdot)$, $\mu_3(\cdot)$ may be taken as quadratic functions.
\end{proposition}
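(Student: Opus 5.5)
The plan is to build the modified $Q$-function directly from the smoothed error trajectory $\xhat(j\mid k)$ of $\bbP_k$. It combines a quadratic detectability (i-IOSS) Lyapunov function with the tail of the optimal cost. Under nominal measurements \cref{eq:nommeas}, define the smoothed error $e(j\mid k)\eqbyd x(j)-\xhat(j\mid k)$. Subtracting the constraints of $\bbP_k$ from the true system gives
\begin{equation*}
e(j+1\mid k) = A e(j\mid k) - G\what(j\mid k), \qquad C e(j\mid k) = -\vhat(j\mid k),
\end{equation*}
for $0\le j\le k-1$. So the smoothed error is driven only by the optimal noise estimates, and its output equals the optimal measurement residual.

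\textbf{Step 1: quadratic i-IOSS function from C1.} Since $(A,C)$ is detectable, choose $L$ with $A-LC$ Schur stable. Then choose $P>0$ solving $(A-LC)'P(A-LC) - P = -2I$. Write $Ae-Gw = (A-LC)e + LCe - Gw$ and expand with Young's inequality. This gives constants $c_1,c_2,c_3>0$ such that $W(e)\eqbyd \norm{e}_P^2$ satisfies
\begin{equation*}
W(Ae - Gw) \le W(e) - c_1\norm{e}^2 + c_2\norm{w}^2 + c_3\norm{Ce}^2
\end{equation*}
for all $e,w$. This step is routine linear algebra.

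\textbf{Step 2: definition and decrease.} Set
\begin{equation*}
Q(j\mid k)\eqbyd W(e(j\mid k)) + \lambda\sum_{i=j}^{k-1}\ell(\what(i\mid k),\vhat(i\mid k)),
\end{equation*}
with $\lambda>0$ to be chosen. Take the difference $Q(j+1\mid k)-Q(j\mid k)$ and apply Step 1 with $w=\what(j\mid k)$ and $Ce=-\vhat(j\mid k)$. This gives
\begin{equation*}
Q(j+1\mid k)-Q(j\mid k)\le -c_1\norm{e(j\mid k)}^2 + c_2\norm{\what}^2 + c_3\norm{\vhat}^2 - \lambda\ell(\what,\vhat).
\end{equation*}
Because $Q>0$ and $R>0$, we can pick $\lambda\ge 2\max\{c_2\olambda(Q),\,c_3\olambda(R)\}$ so that $\lambda\ell(w,v)\ge c_2\norm{w}^2+c_3\norm{v}^2$. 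Then \cref{eq:QunsDecrease} holds with $\mu_3(s)=c_1 s^2$. The lower bound \cref{eq:QunsLBUB} holds with $\mu_1(s)=\underline{\lambda}(P)s^2$, since the summed stage costs are nonnegative.

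\textbf{Step 3: the initial bound \cref{eq:QunsInitUB}, where C2 enters.} The sum in $Q(0\mid k)$ is at most $V_k^0$, which by \cref{lem:unibounded} (using C2) is at most $c_v\norm{x(0)-\ox_0}^2$, uniformly in $k$. For the $W$ term, write $x(0)-\xhat(0\mid k) = (x(0)-\ox_0) - (\xhat(0\mid k)-\ox_0)$. The equality constraint of $\bbP_k$ gives $U_2'(\xhat(0\mid k)-\ox_0)=0$, so $\xhat(0\mid k)-\ox_0 = U_1\alpha_1$ with
\begin{equation*}
\norm{\alpha_1}^2\le \olambda(\tSigma_0)\norm{\alpha_1}^2_{\tSigma_0^{-1}}\le 2\olambda(\tSigma_0)V_k^0 \le 2\olambda(\tSigma_0)c_v\norm{x(0)-\ox_0}^2.
\end{equation*}
Hence $\norm{e(0\mid k)}\le c\norm{x(0)-\ox_0}$, and $Q(0\mid k)\le \mu_0(\norm{x(0)-\ox_0})$ with $\mu_0$ quadratic. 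If $\Sigma_0=0$, the vector $\alpha_1$ is absent and the bound is immediate.

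\textbf{Main obstacle.} The only nontrivial ingredient is the uniform-in-$k$ bound on $V_k^0$ in the semidefinite case, and that is exactly what \cref{lem:unibounded} supplies under C2. Everything else is algebra on the quadratic forms, and all three comparison functions come out quadratic, as claimed.
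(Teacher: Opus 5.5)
Your proof is correct and is essentially the paper's construction: $Q(j\mid k)$ is a quadratic i-IOSS Lyapunov function of the smoothed error plus a weighted remaining cost, with C1 supplying the former and C2 entering only through the uniform bound of \cref{lem:unibounded} on $V_k^0$. The differences are minor. You use the finite-horizon tail $V_k^0 - V^0(j\mid k)$ where the paper uses $V_\infty^0 - V^0(j\mid k)$, and these differ only by a $j$-independent constant. You also bound $\norm{\xhat(0\mid k)-\ox_0}^2 \le 2\olambda(\tSigma_0)V_k^0$ directly, instead of cancelling that term against $-\ell_x$ through a second rescaling of $P$. Finally, with $e = x - \xhat$ one has $Ce(j\mid k) = +\vhat(j\mid k)$, not $-\vhat(j\mid k)$; this sign slip is harmless because only $\norm{Ce}^2$ enters.
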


The construction of the $Q$-function for nonlinear systems is given in \cite{allan:rawlings:2019b}, but under the assumption of stabilizability and a positive definite initial penalty.  So here we can follow the same basic procedure while  restricting to linear systems but extending to unstabilizable systems with only positive semidefinite initial penalty and C2. The biggest change is the loss of the upper bounding function in \eqref{eq:QunsLBUB} that holds for stabilizable systems. The details are provided in \cref{sec:proofs}.  With these tools assembled we are ready to prove the main result.

\medskip
\noindent\paragraph{Proof of  \cref{prop:tvkf}. }
\begin{proof}
Taken together \cref{prop:modQgas,prop:tvkfQuns}
establish that the time-varying linear estimator is
asymptotically stable if C1 and C2 hold, establishing their sufficiency.
The hypotheses of \cref{prop:modQgas} hold for the $Q$-function constructed
in the proof of \cref{prop:tvkfQuns} in \cref{sec:proofs}: for each $j$,
$Q(j\mid k)$ is a continuous function of $\xhat(0\mid k), \what(0\mid k),
\ldots, \what(j-1\mid k)$, which converge as $k\tends\infty$ by
\cref{prop:infhor}.\ref*{it:zlim}, so the horizon limits $Q(j\mid\infty)$
exist.
To establish necessity of C1, note that \cite[Proposition 2.6]{allan:rawlings:teel:2021} establishes
that detectability (C1) is necessary for asymptotic stability, even for general
nonlinear systems, subsuming the linear case considered here.
Next we show that not C2 also implies that the time-varying Kalman filter is
not asymptotically stable. Consider an unstabilizable system partitioned into
orthogonal stabilizability canonical form as in \eqref{eq:contrb}. Not C2
implies there exists a witness $\xi = (0,\xi_2)$, $\xi_2 \neq 0$ lying in the intersection between the
null space of $\Sigma_0$ and the unstable, uncontrollable subspace of $(A,G)$. Choose $x(0)=0$ and
$\ox_0=-\xi$, so the prior error $x(0)-\ox_0$ is the witness itself. Note also that C1 and not C2 force $C\neq0$. Failure of C2 implies
$n_2\ge1$, while $C=0$ would make every mode of the nonempty block $A_2$
unobservable. Since the eigenvalues of $A_2$ lie on or outside the unit circle,
this contradicts C1. For the considered nominal case, the choice $x(0)=0$ gives $x(k)=0$ and $y(k)=0$ for all
$k$, so the estimate error is $x(k)-\xhat(k)=-\xhat(k)$ and the residuals of
$\mathbb{P}_T$ are
$\vhat(k\mid T)=-C\xhat(k\mid T)$. The estimator is asymptotically stable
only if $\xhat(k)\rightarrow0$. We show that the pinned component forces
$V_T^0$ to grow at least linearly, while $V_T^0$ is bounded above by the
cumulative squared estimate. It follows that $\xhat(k)$ has nonvanishing
average energy.

First, we show that at every horizon $T$, the unstabilizable component of the
optimizer's initial estimate $\xhat_2(0\mid T)$ remains bounded below in norm
by the witness $\xi$. The constraint of $\mathbb{P}_T$ gives
$U_2'(\xhat(0\mid T)-\ox_0)=0$ for all $T\geq0$, so the prior deviation
$\xhat(0\mid T)-\ox_0$ lies in the range of $\Sigma_0$. Since $\xi$ lies in
the null space of $\Sigma_0$, i.e. in the range of $U_2$, it is orthogonal to
that deviation, and
$\langle \xi,\xhat(0\mid T)\rangle=-\|\xi\|^2$. Only the second block
participates because $\xi$ has no stabilizable component, and Cauchy--Schwarz
gives

\begin{equation}
\|\xhat_2(0\mid T)\| \geq \|\xi\|
\label{eq:pin}
\end{equation}

for all $T\geq0$. Thus the constraint does not fix $\xhat_2(0\mid T)$, but it
prevents the optimizer from removing its component along the witness.

Next, we show that the pinned component forces $V_T^0$ to grow linearly. Since
C1 holds, let $V_\io(\tx)=\tfrac12\tx'P\tx$, $P\succ0$, and
$a_1,a_2,a_3,c_1,c_2>0$ be as in
\eqref{eq:iioss-bounds}--\eqref{eq:iioss-dec}. Applying
\eqref{eq:iioss-dec} along the optimal trajectory of $\mathbb{P}_T$, with
$\tx=\xhat(k\mid T)$, $\tw=\what(k\mid T)$, and $\ty=C\xhat(k\mid T)= -\vhat(k\mid T)$, summing over
$k=0,\ldots,T-1$, and discarding $V_\io(\xhat(T\mid T))\ge0$ gives

\begin{equation}
\begin{aligned}
a_3\sum_{k=0}^{T-1}\|\xhat(k\mid T)\|^2
&\leq V_\io(\xhat(0\mid T))\\
&\phantom{\leq} +\sum_{k=0}^{T-1}c_1\|\what(k\mid T)\|^2 +c_2\|\vhat(k\mid T)\|^2
\end{aligned}
\label{eq:iosssum}
\end{equation}

The right-hand side of \eqref{eq:iosssum} contains exactly the initial-error,
process-residual, and output-residual terms penalized in $\mathbb{P}_T$, but in
Euclidean rather than weighted norms. Since $\ox_0=-\xi$,
\eqref{eq:quadbound} with
$\|x+y\|^2\leq2(\|x\|^2+\|y\|^2)$ gives
$V_\io(\xhat(0\mid T))
\leq
\overline{\lambda}(P)
\bigl(\|\xi\|^2+\|\xhat(0\mid T)-\ox_0\|^2\bigr)$,
\eqref{eq:rangebound} converts
$\|\xhat(0\mid T)-\ox_0\|^2
\leq
\overline{\lambda}(\Sigma_0)
\|\xhat(0\mid T)-\ox_0\|_{\Sigma_0^\dagger}^2$,
and \eqref{eq:quadbound} converts
$\|\what\|^2
\leq
\|\what\|_{Q^{-1}}^2/\ulambda(Q^{-1})$
and likewise for $\vhat$ with $R^{-1}$. Hence the right-hand side of
\eqref{eq:iosssum} is bounded by the optimal cost:

\begin{equation}
\begin{aligned}
V_\io(\xhat(0\mid T))
&+
\sum_{k=0}^{T-1}c_1\|\what(k\mid T)\|^2 +c_2\|\vhat(k\mid T)\|^2 \\
&\leq
\overline{\lambda}(P)\|\xi\|^2
+
2c_\ell V_T^0
\end{aligned}
\label{eq:rhs-bound}
\end{equation}

where
\[
c_\ell
\eqbyd
\max\left\{
\overline{\lambda}(P)\overline{\lambda}(\Sigma_0),\;
\frac{c_1}{\ulambda(Q^{-1})},\;
\frac{c_2}{\ulambda(R^{-1})}
\right\}
>0
\]

On the left of \eqref{eq:iosssum}, each term is at least as large as the
contribution of the unstabilizable block, which evolves autonomously as
$\xhat_2(k+1\mid T)=A_2\xhat_2(k\mid T)$ so
$\xhat_2(k\mid T)=A_2^k\xhat_2(0\mid T)$. Every eigenvalue of $A_2$ has
modulus at least one, so applying \eqref{eq:gramian} with $B=A_2$ and using
\eqref{eq:pin} gives

\begin{equation}
\begin{aligned}
\sum_{k=0}^{T-1}\|\xhat(k\mid T)\|^2
&\geq
\sum_{k=0}^{T-1}
\|A_2^{\,k}\xhat_2(0\mid T)\|^2 \\
&\geq
c_B T\|\xhat_2(0\mid T)\|^2
\geq
c_B T\|\xi\|^2
\end{aligned}
\label{eq:lhs-bound}
\end{equation}

for $T\geq1$. Chaining \eqref{eq:lhs-bound} and \eqref{eq:rhs-bound} through
\eqref{eq:iosssum} gives
$a_3c_BT\|\xi\|^2
\leq
\overline{\lambda}(P)\|\xi\|^2+2c_\ell V_T^0$, and dividing by $T$ and the
positive denominator,

\begin{equation}
\liminf_{T\to\infty}\frac{V_T^0}{T}
\geq
c_g,
\qquad
c_g
\eqbyd
\frac{a_3c_B\|\xi\|^2}{2c_\ell}
>0
\label{eq:Vlinear}
\end{equation}

Now we upper bound $V_T^0$ with the cumulative squared estimate. Note that at
$T=0$ the estimator has no stages and $\xhat(0\mid0)=\ox_0$ is feasible at
zero cost, so $V_0^0=0$. The optimal solution at horizon $T$, extended by the
single stage $\omega(T)=0$, is feasible at horizon $T+1$: the constraint and
the first $T$ stages are unchanged and the cost increases by the one new stage
$\ell(0,\vhat(T\mid T))$. Since
$\xhat(T)=\xhat(T\mid T)$, that residual is $-C\xhat(T)$, so by optimality
of $V_{T+1}^0$ and \eqref{eq:quadbound},
$V_{T+1}^0
\leq
V_T^0+
(1/2)\|C\xhat(T)\|_{R^{-1}}^2
\leq
V_T^0+c_y\|\xhat(T)\|^2$
with
$c_y\eqbyd\|C\|^2/(2\ulambda(R))>0$, and summing from
$V_0^0=0$ gives

\begin{equation}
V_T^0
\leq
c_y\sum_{k=0}^{T-1}\|\xhat(k)\|^2
\label{eq:Venergy}
\end{equation}

Combining \eqref{eq:Vlinear} and \eqref{eq:Venergy} gives

\begin{equation}
\liminf_{T\to\infty}
\frac{1}{T}
\sum_{k=0}^{T-1}\|\xhat(k)\|^2
\geq
\frac{c_g}{c_y}
\label{eq:avgerr}
\end{equation}

i.e., the running average of $\|\xhat(k)\|^2$ is bounded away from zero. By
Cesàro convergence,
$\|\xhat(k)\|^2\rightarrow0
\Rightarrow
(1/T)\sum_{k=0}^{T-1}\|\xhat(k)\|^2\rightarrow0$.
However, \eqref{eq:avgerr} gives
$(1/T)\sum_{k=0}^{T-1}\|\xhat(k)\|^2\not\rightarrow0$, hence
$\xhat(k)\not\rightarrow0$. Since $x(k)=0$, the estimate error
$x(k)-\xhat(k)$ does not converge to zero, and C2 is therefore also
necessary for asymptotic stability, and we have established
\cref{prop:tvkf}.
\end{proof}

\section{Examples}
\label{sec:examples}
\subsection*{Example 1: $V_k(\cdot) = \norm{\cdot}_{\Sigma(k)^{-1}}^2$ is not a
  Lyapunov  function} 
The candidate function is not a Lyapunov function
for the systems of interest.  Let $A = \alpha I_n$, $C = O_n$, $G = O_n$,
$Q=I_n$, $R=I_n$, $\Sigma(0) = I_n$ with scalar $\alpha \in (0,1)$.   The system is
detectable and stabilizable 
with a positive definite initial covariance.   The solution of the DRE
gives $\Sigma(k) = \alpha^{2k}I_n$, and $L(k)=0$.   The estimate
error is $\ehat(k) = A^k \ehat(0) = \alpha^k \ehat(0)$. 
The candidate Lyapunov function is then 
\begin{equation*}
V_k(\ehat(k)) = \alpha^{2k}\ehat(0)'\alpha^{-2k}I_n \ehat(0) = \norm{\ehat(0)}^2
\end{equation*}
which is a constant function of time and does not satisfy any cost decrease while the estimate error
converges to zero exponentially fast.

\subsection*{Example 2:  An asymptotically stable Kalman filter is not robust}
\begin{figure}
\centerline{\includegraphics[width=0.45\textwidth]{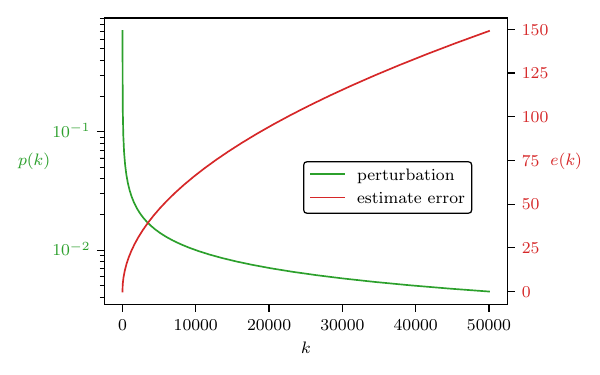}}
\caption{Converging perturbation and diverging estimate error for the
  time-varying Kalman filter applied to an unstabilizable system with poles on
  the unit circle.}
\label{fig:ex_4}
\end{figure}

Consider the unstabilizable case $A = 1, C=1, G=0, \Sigma_0 = 1, Q=1, R=1$.  Assume an
unmodeled disturbance, $p$, enters the system
\begin{equation*}
x^+ = A x + G w + p \qquad y = Cx + v
\end{equation*}
Let $v(k) = 0$ and $p(k) = 1/\sqrt{k+2}$ for $k \geq 0$.
Although the unmodeled disturbance is bounded, monotonically decreasing, and
converges to zero, and the time-varying Kalman filter is globally asymptotically stable,
the estimate error diverges even with perfect measurement.  \cref{fig:ex_4} shows
the result assuming perfect initial information, $\ehat(0) = 0$. 
It is well known that injecting a slowly decaying perturbation into an
asymptotically stable system may destabilize the system, and we see that
situation arises when the time-varying Kalman filter designed for an
unstabilizable system is applied to that system subjected to a slowly decaying 
disturbance.

 \section{Conclusions}
This paper has offered an alternative methodology to the standard Riccati analysis for understanding the asymptotic stability of the time-varying optimal estimator (Kalman filter) for a detectable but unstabilizable linear time invariant system.   This methodology is based on the properties of the optimal estimation problem that is equivalent to the time-varying Kalman filter. By proving Proposition \ref{prop:tvkf} with these optimization-based arguments, it was shown that the necessary and sufficient conditions for asymptotic stability (C1 and C2) can be readily deduced.

Part of the motivation for this work is to provide tools in the linear setting that are more easily extended to nonlinear systems as in moving horizon estimation. Since the standard Riccati iteration convergence analysis is notoriously cumbersome and does not extend in any obvious way to nonlinear systems, alternative analysis methods such as the one presented here might better prepare the next generation of graduate students for their future research careers.

Future work will be directed at deriving the necessary and sufficient conditions for exponential stability, which provides a robustness of the stability that is absent with asymptotic stability.
 
\bibliographystyle{abbrvnat}
{\small \bibliography{paper}

\section*{Acknowledgment}
This paper is dedicated to JBR's daughter and son-in-law, Melanie R. Rawlings and Zachary W. Zinda, at whose home much of this analysis was initiated. JBR would like to thank UCSB for a sabbatical in Spring 2025 that enabled this study to be undertaken.
TQ is grateful for research support from the National Science Foundation Graduate Research Fellowship Program under Grant No.~2139319.
MAM would like to thank Leibniz University Hannover for a sabbatical in Fall 2025 that he spent at UCSB.
The authors used Claude models, including Opus and Fable, from \cite{anthropic:claude}, and GPT-5.6 models from \cite{openai:gpt56} for manuscript preparation and proof development. The proof-development assistance concerned \cref{prop:infhor}, the C2-necessity argument in \cref{prop:tvkf}, and the auxiliary facts on which that argument rests, namely \cref{eq:polysample,eq:gramian}.
\section{Proofs}
\label{sec:proofs}

The following preliminary facts will prove useful in the arguments to follow. We
collect them here mostly without proof in the interests of brevity.
\begin{enumerate}
  \item Let $V(x) = \tfrac12 x'Hx + x'd + f$ with $H > 0$. Then $\min_x V(x)$ has the unique solution $x^0 = -H^{-1}d$, and for all $x \in \bbR^n$
\begin{equation}
V(x) = V(x^0) + \tfrac12\norm{x - x^0}_H^2
\label{eq:quadmin}
\end{equation}
\item A semidefinite matrix $P \in \bbR^{n \times n}$, denoted, $P \geq 0$, has
 a factorization $P = U\Lambda U'$ in which  $U$ is orthogonal and $\Lambda$ is
 diagonal with real, non-negative elements. For all $x \in \bbR^n$
\begin{equation}
  \label{eq:quadbound}
\ulambda(P)  \norm{x}^2 \leq x'Px \leq \olambda(P) \norm{x}^2
\end{equation}
where $\ulambda(P), \olambda(P) \geq 0$ denote the smallest and largest eigenvalues of
$P$, respectively. 
\item The pseudoinverse of $P \geq 0$ is $P^\dagger = U \Lambda^\dagger U'$, in which
$\Lambda^\dagger$ is diagonal with $\Lambda^\dagger_{ii} = 1/\Lambda_{ii}$ if
$\Lambda_{ii} > 0$ and $\Lambda^\dagger_{ii} = 0$ otherwise, so that the range of $P^\dagger$ is the same as the range of $P$, i.e.,
$\mc{R}(P^\dagger) = \mc{R}(P)$. For all $x \in \mc{R}(P)$
\begin{equation}
\label{eq:rangebound}
\norm{x}^2 \leq \olambda(P) \, x' P^\dagger x
\end{equation}
\begin{proof}
\cref{eq:rangebound}, write $x = U \beta$, so that membership in
$\mc{R}(P)$ gives $\beta_i = 0$ whenever $\Lambda_{ii} = 0$. Since
$\Lambda_{ii} \leq \olambda(P)$ for every $i$,
\begin{equation*}
\olambda(P) \, x'P^\dagger x
 = \olambda(P) \sum_{\Lambda_{ii} > 0} \beta_i^2 / \Lambda_{ii}
 \geq \sum_{\Lambda_{ii} > 0} \beta_i^2 = \norm{x}^2
\end{equation*}
\end{proof}

\item An infinite horizon linear-quadratic problem for the system
$x^+ = Ax + Bu$, $x(0)=x_0$ with $(A,B)$ stabilizable and stage cost
$\ell_R(x,u)=\tfrac12(\norm{x}_{\tQ}^2 + \norm{u}_{\tR}^2)$ with
$\tQ \geq 0$, $\tR>0$, admits a stabilizing feedback law $u=Kx$, i.e.,
$A+BK$ is Schur, and this feedback has bounded infinite horizon cost
\begin{equation}
V_{R,\infty}^0 \eqbyd \sum_{j=0}^\infty \ell_R(x(j), u(j))
= \tfrac12 x_0' S x_0 \leq c_c \norm{x_0}^2
\label{eq:cc}
\end{equation}
for all $x_0 \in \bbR^n$, with $c_c = \tfrac12 \olambda(S)$, where $S \geq 0$
is the unique solution of the Lyapunov equation
\begin{equation*}
S - (A+BK)'S(A+BK) = \tQ + K'\tR K
\end{equation*}
which exists because $A+BK$ is Schur.

\item If $(A,C)$ is detectable, then for the linear system
$\tx^+ = A\tx + G\tw$, $\ty = C\tx$, there exists a quadratic IOSS-Lyapunov
function $V_\io(\tx)=\tfrac12\tx'P\tx$ with $P>0$ and constants
$a_1,a_2,a_3,c_1,c_2>0$ satisfying \citep{cai:teel:2008}
\begin{equation}
a_1\|\tx\|^2 \leq V_\io(\tx) \leq a_2\|\tx\|^2
\label{eq:iioss-bounds}
\end{equation}
and
\begin{equation}
V_\io(\tx^+) - V_\io(\tx)
\leq
-a_3\|\tx\|^2 + c_1\|\tw\|^2 + c_2\|\ty\|^2
\label{eq:iioss-dec}
\end{equation}

\item Let $q$ be a polynomial of degree at most $d$ with complex coefficients.
Then
\begin{gather}
\sum_{k=0}^{T-1}|q(k)|^2
\geq
\gamma_d\,T\,|q(0)|^2,
\quad
T\geq1
\label{eq:polysample}
\\
\gamma_d:=\frac{1}{2^{2d+2}(2d+1)^2} \notag
\end{gather}

\begin{proof}
Write $q=q_{\mathrm re}+iq_{\mathrm im}$ with real-coefficient
$q_{\mathrm re},q_{\mathrm im}$, and set
$p:=q_{\mathrm re}^2+q_{\mathrm im}^2$, a real polynomial of degree at most
$D:=2d$, nonnegative on $\mathbb R$, and equal to $|q|^2$ at every real
argument.

We compare $p(0)$ with samples of $p$ along strides. Fix $h\geq1$. Lagrange
interpolation at the $D+1$ nodes $\{h,2h,\dots,(D+1)h\}$ reproduces every
polynomial of degree at most $D$, so
\begin{gather*}
p(0)
=
\sum_{r=1}^{D+1}\beta_r\,p(rh)\\
\beta_r
=
\prod_{\substack{1\leq s\leq D+1\\s\ne r}}
\frac{0-sh}{rh-sh}
=
\prod_{\substack{1\leq s\leq D+1\\s\ne r}}
\frac{-s}{r-s}
\end{gather*}

Evaluating the product gives
$\beta_r=(-1)^{r-1}\binom{D+1}{r}$, so
$\sum_{r=1}^{D+1}|\beta_r|=2^{D+1}-1<2^{D+1}$. Since $p\geq0$, replacing each
coefficient by its magnitude can only increase the sum, and
\begin{equation}
p(0)
\leq
2^{D+1}\sum_{r=1}^{D+1}p(rh)
\label{eq:stride}
\end{equation}

Now sum \eqref{eq:stride} over the strides $h=1,\dots,H$, in which
$H:=\lfloor(T-1)/(D+1)\rfloor$ and $\lfloor\cdot\rfloor$ is the floor
function. Every term of the resulting double sum is $p(k)$ for an integer
$k=rh$ with $1\leq k\leq(D+1)H\leq T-1$, and each such $k$ arises for at most
one stride per $r$, hence at most $D+1$ times in all. The double sum is
therefore at most $(D+1)\sum_{k=0}^{T-1}p(k)$, and
\begin{equation}
\sum_{k=0}^{T-1}p(k)
\geq
\frac{H\,p(0)}{2^{D+1}(D+1)}
\label{eq:hcount}
\end{equation}

It remains to convert $H$ into $T$. For $T\geq2(D+2)$ we have
$2(T-D-2)\geq T$, so
$H\geq\frac{T-1}{D+1}-1=\frac{T-D-2}{D+1}
\geq\frac{T}{2(D+1)}$, and \eqref{eq:hcount} gives
$\sum_{k=0}^{T-1}p(k)\geq
T\,p(0)/\bigl(2^{D+2}(D+1)^2\bigr)$.
For $1\leq T<2(D+2)$ the $k=0$ term alone gives
$\sum_{k=0}^{T-1}p(k)\geq p(0)\geq T\,p(0)/\bigl(2(D+2)\bigr)$, which is the
stronger bound since
$2(D+2)\leq2^{D+2}(D+1)^2$. With $D=2d$, both ranges give
\eqref{eq:polysample}.
\end{proof}

\item Let $B$ be a complex square matrix with every eigenvalue satisfying
$|\lambda|\geq1$. Then there exists $c_B>0$, depending only on $B$, such that
\begin{equation}
W_T
:=
\sum_{k=0}^{T-1}(B^k)^*B^k
\succeq
c_B\,T I,
\qquad
T\geq1
\label{eq:gramian}
\end{equation}
In particular, $\ulambda(W_T)\to\infty$.

\begin{proof}
Write $B=\Xi J\Xi^{-1}$ in Jordan form and let $\usigma(\Xi)$ denote the smallest singular value of $\Xi$. 
Since $B^k=\Xi J^k\Xi^{-1}$,
for arbitrary $v$ we set $w:=\Xi^{-1}v$ and obtain
$\|B^kv\|=\|\Xi J^kw\|\geq\usigma(\Xi)\,\|J^kw\|$, so that
\[
\sum_{k=0}^{T-1}\|B^kv\|^2
\;\geq\;
\usigma(\Xi)^2\sum_{k=0}^{T-1}\|J^kw\|^2
\]
Partition $w=(w_1,\dots,w_p)$ conformally with the Jordan blocks
$J=\operatorname{diag}(J_1,\dots,J_p)$. The blocks act independently, so
$\|J^kw\|^2=\sum_{i=1}^{p}\|J_i^kw_i\|^2$, and the right side becomes
$\sum_{i=1}^{p}\sum_{k=0}^{T-1}\|J_i^kw_i\|^2$.
It therefore suffices to establish, for a single Jordan block of size~$m$ at
an eigenvalue $\lambda$ with $|\lambda|\geq1$, a constant $c>0$ such that
\[
\sum_{k=0}^{T-1}\|J^kz\|^2
\;\geq\;
c\,T\,\|z\|^2
\qquad\text{for all }z\text{ and }T\geq1.
\]

\emph{Blocks with $|\lambda|>1$.}
The inverse $J^{-1}$ has spectral radius $1/|\lambda|<1$, so its powers
converge to zero and $c_\lambda:=\sup_{k\geq0}\|J^{-k}\|$ is finite.
For every~$k$ we have
$\|z\|=\|J^{-k}J^kz\|\leq c_\lambda\|J^kz\|$, and summing from
$k=0$ to $T-1$ gives
\[
\sum_{k=0}^{T-1}\|J^kz\|^2
\;\geq\;
c_\lambda^{-2}\,T\,\|z\|^2
\]

\emph{Blocks with $|\lambda|=1$.}
Write $J=\lambda I+N$, where $N$ is the nilpotent shift with $N^m=0$.
The binomial theorem gives
\[
J^kz
\;=\;
\sum_{j=0}^{m-1}\binom{k}{j}\lambda^{k-j}N^jz
\]
Since $|\lambda|=1$, the $r$-th coordinate of $J^kz$ has modulus
$|q_r(k)|$, where
\[
q_r(k)
\;:=\;
\sum_{j=0}^{m-r}\binom{k}{j}\lambda^{-j}(N^jz)_r,
\quad r=1,\dots,m
\]
is a polynomial in~$k$ of degree at most $m-1$.
The upper limit is $m-r$ because $(N^jz)_r$ vanishes for $j>m-r$.
Since $q_r(0)=z_r$ and $\sum_{r=1}^{m}|q_r(0)|^2=\|z\|^2$, applying
\eqref{eq:polysample} to each coordinate and summing over~$r$ gives
\[
\sum_{k=0}^{T-1}\|J^kz\|^2
\;=\;
\sum_{r=1}^{m}\sum_{k=0}^{T-1}|q_r(k)|^2
\;\geq\;
\gamma_{m-1}\,T\,\|z\|^2
\]
so the constant for this block is $c=\gamma_{m-1}$.

\emph{Assembly.}
Writing $\underline c_J:=\min_i c_i$ for the smallest block constant and
summing the per-block bounds gives
$\sum_{k=0}^{T-1}\|J^kw\|^2\geq\underline c_J\,T\,\|w\|^2$.
Since $\|w\|=\|\Xi^{-1}v\|\geq\|v\|/\|\Xi\|$, we obtain
\eqref{eq:gramian} with
$c_B=\usigma(\Xi)^2\,\underline c_J/\|\Xi\|^2$.
\end{proof}
\end{enumerate} 

\noindent\paragraph{Uniform upper bound of $V_T^0$ for semidefinite $\Sigma_0$ in \cref{lem:unibounded}.}
\begin{proof}
Here we finish the deferred proof of \cref{lem:unibounded} by finding a uniform upper bound for $V_T^0$ with $\Sigma_0 \geq 0$ under assumption C2. Assume without 
loss of generality that the  unstabilizable system is partitioned into stabilizability canonical form as in \cref{eq:contrb}.
First we establish that all $\chi_2(0) \in \bbR^{n_2}$ are feasible for problem $\bbP_T$. Recall that the columns of $U_2$ are a basis for the null space of $\Sigma_0$ from the SVD in \cref{eq:SigmaSVD}. Partition $U_2$ and $\ox_0$ conformably with \cref{eq:contrb} as $U_2 = [\,U_{12}'\ U_{22}'\,]',\, \ox_0=(\ox_{10}, \ox_{20})$ and define
\begin{equation*}
  E_2 \eqbyd \begin{bmatrix} 0 \\ I_{n_2} \end{bmatrix}
\end{equation*}
whose columns are a basis for the unstable, uncontrollable modes of $(A,G)$ in the coordinates of \cref{eq:contrb}. Let $\xi \in \bbR^n$ belong to the intersection of these two subspaces, so that $\xi=U_2 \alpha = E_2 \beta$, or
\begin{equation*}
  0 = \begin{bmatrix} U_2 & E_2 \end{bmatrix} \begin{bmatrix} \alpha \\ -\beta \end{bmatrix} =
\begin{bmatrix} U_{12} & 0 \\ U_{22} & I \end{bmatrix} \begin{bmatrix} \alpha \\ -\beta \end{bmatrix}.
\end{equation*}
Under C2, $\xi=0$ is the only element of the intersection, which requires $\alpha=0, \beta=0$ as the only solution, and hence the columns of $\begin{bmatrix} U_2 & E_2 \end{bmatrix}$ are linearly independent. Given the $0$ and $I$ blocks in $E_2$, these columns are linearly independent if and only if the columns of $U_{12}$ are linearly independent. The constraint on the initial condition in $\bbP_T$ is $U_2'(\chi(0)-\ox_0) =0$, or in partitioned form
\begin{equation*}
  \begin{bmatrix} U_{12}' & U_{22}' \end{bmatrix}
  \begin{bmatrix} \chi_1(0) - \ox_{10} \\ \chi_2(0) - \ox_{20} \end{bmatrix} = 0
\end{equation*}
or
\begin{equation*}
U_{12}' (\chi_1(0) - \ox_{10})  = -U_{22}' (\chi_2(0) - \ox_{20} )
\end{equation*}
Since $U_{12}$ has linearly independent columns, $U_{12}'$ has linearly independent rows, and this constraint has a solution $\chi_1(0)$ for all $\chi_2(0) \in \bbR^{n_2}$ and $\ox_0 \in \bbR^n$. Therefore all $\chi_2(0)$ are feasible in $\bbP_T$.  

In particular $\chi_2(0) = x_2(0)$ is feasible in $\bbP_T$, and we choose $\chi_1(0) = \ox_{10} - (U_{12}')^\dagger U_{22}'(x_2(0)-\ox_{20})$, which satisfies the constraint. 
Defining the candidate error  $e \eqbyd x - \chi$ gives the candidate error system $e^+ = Ae - G \omega$, $\nu = Ce$. Since $e_2(0)= x_2(0)-\chi_2(0) = 0$, we have $e_2(j) = 0$, all $0 \leq j \leq T-1$, which gives the remaining system
\begin{equation*}
  e_1^+  = A_1 e_1 - G_1 \omega \qquad \nu = C_1 e_1
\end{equation*}
with $e_1(0) = x_1(0) - \chi_1(0)$. With $\chi_1(0)$ specified, the optimization over $\omega$ in problem $\bbP_T$ becomes a linear-quadratic problem for the system $(A_1, G_1)$
with state weight $C_1'R^{-1}C_1 \geq 0$ and input weight $Q^{-1} > 0$.
Since $(A_1,G_1)$ is stabilizable, the bound \cref{eq:cc} applies. Problem $\bbP_T$ sums only $T$ stages, but its stage costs are nonnegative, so the optimal $T$-stage $\omega$ cost is bounded above by the first $T$ stage costs of the stabilizing feedback of \cref{eq:cc}, and therefore by that feedback's infinite horizon cost, for every $T \geq 0$. Applying the bound in \cref{eq:cc},
\begin{align*}
  V_{R,\infty}^0 &\leq c_c \norm{e_1(0)}^2 \\
  &\qquad= c_c \norm{x_1(0) - \ox_{10} + (U_{12}')^\dagger U_{22}'(x_2(0)-\ox_{20})}^2 \\
  &\qquad= c_c \norm{ \begin{bmatrix} I & (U_{12}')^\dagger U_{22}'\end{bmatrix} (x(0) - \ox_0) }^2 \\
  &\qquad\leq c_d \norm{x(0) - \ox_0}^ 2
\end{align*}
with $c_d = c_c \norm{ \begin{bmatrix} I & (U_{12}')^\dagger U_{22}'\end{bmatrix} }^2$. 
The cost of the initial stage is zero if $\Sigma_0=0$; otherwise,  $\tSigma_0 > 0$ and the cost of the initial stage has an upper bound
\begin{align*}
  \ell_x(\chi(0) - \ox_0) &= \frac{1}{2}\norm{ \begin{bmatrix} -(U_{12}')^\dagger U_{22}'(x_2(0)-\ox_{20}) \\ x_2(0) -\ox_{20} \end{bmatrix} }_{\Sigma_0^\dagger}^2 \\
  &\leq \frac{1}{2\ulambda(\tSigma_0)} \norm{ \begin{bmatrix} -(U_{12}')^\dagger U_{22}'\\ I  \end{bmatrix}}^2 \norm{x_2(0) -\ox_{20}}^2\\
&\leq c_e \norm{x_2(0) -\ox_{20}}^2
\end{align*}
with $c_e = \frac{1}{2\ulambda(\tSigma_0)} \norm{ \begin{bmatrix} -(U_{12}')^\dagger U_{22}' \\ I  \end{bmatrix}}^2 $, and $c_e \eqbyd 0$ if $\Sigma_0 = 0$.

Adding these two costs gives the uniform upper bound for $\bbP_T$
\begin{equation*}
  V_T^0 \leq c_v \norm{x(0) - \ox_0}^2  \qquad c_v \eqbyd c_d + c_e
\end{equation*}
which holds for all $T \geq 0$.
\end{proof}

\noindent\paragraph{Existence of infinite horizon solution.}
\nopagebreak

We now establish the following proposition that is used in the proof of \cref{prop:modQgas}.
 \begin{proposition}[Infinite horizon state estimation with C1, C2 and $\Sigma_0\geq 0$]
\label{prop:infhor}
Consider state estimation problem $\bbP_T$ in \cref{lem:semiPT} with $\Sigma_0\geq 0$. \Cref{it:zlim,it:Vlim} hold under assumption C2 alone; \cref{it:xTT} additionally requires detectability (C1).
\begin{enumerate}

\item
  \label{it:zlim}
The following limits exist
\begin{align*}
  \xhat(0 \mid \infty) &\eqbyd \lim_{T \rightarrow \infty} \xhat(0 \mid T) \\
  \what(j \mid \infty) &\eqbyd \lim_{T \rightarrow \infty} \what(j \mid T)
\end{align*}
the second for every fixed $j \geq 0$, with the limit taken over $T \geq j+1$.
Define $\whatseq_\infty \eqbyd (\what(j\mid\infty))_{j\geq 0}$.

\item
\label{it:xTT}
  $\xhat(T \mid T) - \xhat(T \mid \infty) \rightarrow 0$ as $T \rightarrow \infty$, where $\xhat(\cdot \mid \infty)$ is the trajectory generated by $(\xhat(0 \mid \infty), \whatseq_\infty)$.

\item
\label{it:Vlim}
The infinite horizon cost function defined by
\begin{gather*}
V_\infty(\chi(0), \omegaseq_\infty) \eqbyd \tfrac12\norm{\chi(0)-\ox_0}_{\Sigma_0^{\dagger}}^2 \\
\qquad + \tfrac12\sum_{j=0}^\infty \big(\norm{\omega(j)}_{Q^{-1}}^2 + \norm{\nu(j)}_{R^{-1}}^2\big) \\
\text{subject to~} \chi^+ = A \chi + G \omega, \quad y = C\chi + \nu,\\
U_2'(\chi(0)-\ox_0) = 0
\end{gather*}
satisfies
\begin{equation*}
V_\infty(\xhat( 0\mid \infty), \whatseq_\infty) = V^0_\infty
\end{equation*}
where $V^0_\infty \eqbyd \lim_{T\rightarrow \infty} V^0_T$.
\end{enumerate}
\end{proposition}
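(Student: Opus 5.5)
Everything is a finite-dimensional least-squares problem, so I will work with the Hilbert space of decision variables $z=(\alpha_1,\omegaseq)$ from the proof of \cref{lem:exist}, where $\chi(0)=\ox_0+U_1\alpha_1$ builds in the constraint. Let $\mc{H}$ be the space of $(\alpha_1,\omegaseq_\infty)$ with norm $\norm{\alpha_1}^2_{\tSigma_0^{-1}}+\sum_j\norm{\omega(j)}^2_{Q^{-1}}<\infty$. On $\mc{H}$, $V_\infty$ is a convex quadratic: the squared norm plus the nonnegative output residual terms. Each $V_T$ depends only on the first $T$ noise terms.

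\textbf{Item~\ref{it:Vlim} and item~\ref{it:zlim}.} The plan is to treat $V_T^0$ as a projection problem and exploit nested feasibility.
\begin{enumerate}
\item Let $z_T$ be the $\bbP_T$ minimizer, padded with zeros. By \cref{lem:unibounded} (which uses C2), $\norm{z_T}^2_{\mc{H}}\le 2V_T^0\le 2c_v\norm{x(0)-\ox_0}^2$, so the sequence is bounded.
\item The key identity is \cref{eq:quadmin} applied to $V_{T'}$ for $T'\ge T$. Because $V_{T'}$ is quadratic with Hessian at least the $\mc{H}$-inner product, and $z_T$ is feasible for $V_{T'}$ while $z_{T'}$ is optimal for it, I get
\[V_{T'}(z_T)\ge V_{T'}^0+\tfrac12\norm{z_T-z_{T'}}^2_{\mc{H}}.\]
\item I need the reverse comparison as well. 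Evaluating $V_T$ at the truncation of $z_{T'}$ gives $V_T(z_{T'})\le V_{T'}^0$. The cleanest device is the parallelogram argument: evaluate $V_{T}$ at the midpoint $\tfrac12(z_T+z_{T'})$. Using convexity of the residual part and $V_T\le V_{T'}$ pointwise on truncations,
\[V_T^0\le V_T\bigl(\tfrac12(z_T+z_{T'})\bigr)\le \tfrac12 V_T^0+\tfrac12 V_{T'}^0-\tfrac18\norm{z_T-z_{T'}}^2_{\mc H}.\]
Hence $\norm{z_T-z_{T'}}^2_{\mc H}\le 4(V_{T'}^0-V_T^0)\to0$, since $V_T^0$ converges by \cref{lem:unibounded}.
\item So $(z_T)$ is Cauchy in $\mc H$ and converges to some $z_\infty$. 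Convergence in $\mc H$ implies componentwise convergence, which gives item~\ref{it:zlim}.
\item For item~\ref{it:Vlim}, lower semicontinuity (Fatou on the nonnegative stage terms, each of which is continuous in finitely many coordinates) gives $V_\infty(z_\infty)\le\liminf V_T(z_T)=V_\infty^0$. Conversely, $V_\infty(z_\infty)\ge V_T(z_\infty)\ge V_T^0$ for every $T$. Together these give equality.
\end{enumerate}

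\textbf{Item~\ref{it:xTT}.} This is the main obstacle, since pointwise convergence at fixed $j$ does not control the moving endpoint $T$. I would use C1 through the IOSS Lyapunov function \cref{eq:iioss-bounds,eq:iioss-dec} applied to the difference trajectory $\delta(k)=\xhat(k\mid T)-\xhat(k\mid\infty)$. This is a trajectory of the noise-free system with $\tw=\what(k\mid T)-\what(k\mid\infty)$ and $\ty=\vhat(k\mid\infty)-\vhat(k\mid T)$.

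Iterating \cref{eq:iioss-dec} from $k=0$ to $T$ gives
\[a_1\norm{\delta(T)}^2\le V_\io(\delta(0))\rho^T+c\sum_{k<T}\rho^{T-1-k}\bigl(\norm{\tw(k)}^2+\norm{\ty(k)}^2\bigr),\]
with $\rho=1-a_3/a_2<1$. The weighted sum is split at $T/2$:
\begin{enumerate}
\item Terms with $k<T/2$ carry the factor $\rho^{T/2}$ times a uniformly bounded total. The bound holds because residual energies are controlled by $\norm{z_T-z_\infty}_{\mc H}^2$ plus the output residual part of $V_T^0$ and $V_\infty^0$, both bounded by $c_v\norm{x(0)-\ox_0}^2$.
\item Terms with $k\ge T/2$ are bounded by $\sum_{k\ge T/2}$ of the tail energies. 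These consist of the tail of $z_\infty$ and the stage costs of $z_T$ on $[T/2,T)$. The $z_\infty$ tail vanishes because its total cost is finite. The $z_T$ tail is bounded by $\norm{z_T-z_\infty}^2_{\mc H}$ plus the $z_\infty$ tail, and both tend to zero.
\end{enumerate}
The output residual difference needs separate care, because it is not directly part of the $\mc H$-norm. I expect to handle it with the elementary inequality $\norm{a-b}^2\le2\norm a^2+2\norm b^2$, using the fact that the late-stage residual costs of both solutions vanish. For $z_T$ this follows from $V_T(z_T)-V_{T/2}(z_T)\le V_T^0-V_{T/2}^0\to0$, using $V_{T/2}(z_T)\ge V_{T/2}^0$. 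Hence $\delta(T)\to0$.
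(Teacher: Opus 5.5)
Your proposal is correct. For the first and third items it is essentially the paper's argument. Strong convexity of $V_T$ at horizon $T$, applied to the truncation of a longer-horizon optimizer, together with the convergence of $V_T^0$ from \cref{lem:unibounded}, gives the Cauchy property. A Fatou/feasibility sandwich then gives the value identity.

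The paper gets the Cauchy bound more directly from \cref{eq:quadmin}. For the truncation $\tilde z$ of $z_{T'}$ it has $V_T(\tilde z)-V_T^0=\tfrac12\norm{\tilde z-z_T}_{H_T}^2\le V_{T'}^0-V_T^0$. Your midpoint device reproduces this with constant $4$ instead of $2$, and your step 2 is never used. As written, your midpoint inequality controls only the first $T$ noise components, since $V_T$ does not see the rest. Including the tail of $z_{T'}$ would need the sharper bound $V_T(\tilde z)\le V_{T'}^0-\tfrac12\sum_{j=T}^{T'-1}\norm{\what(j\mid T')}_{Q^{-1}}^2$. You never actually need that tail, though.

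The real difference is in the second item. The Hessian $H_T$ in \cref{eq:gap} contains the output-residual term, so the paper keeps $\sum_{j<T}\norm{\vhat(j\mid p)-\vhat(j\mid T)}_{R^{-1}}^2$ in the bound. Letting $p\to\infty$ then yields a full-horizon estimate, of size $c_z\eps$, on the differences in initial state, process noise, and output residuals. The summed IOSS inequality \cref{eq:iioss-dec}, with the $-a_3$ terms simply discarded, then bounds $\norm{\xhat(T\mid T)-\xhat(T\mid\infty)}^2$ by a multiple of $\eps$.

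By passing to the $\mc{H}$ norm you throw away exactly this output-residual information, even though your own midpoint identity carries the $H_T$ norm. You must therefore recover it through four extra ingredients:
\begin{enumerate}
\item the contraction factor $\rho$ of the IOSS recursion;
\item the split of the sum at $T/2$;
\item finiteness of the output cost of $z_\infty$, so the third item must be proved first;
\item the vanishing of the late-stage costs of $z_T$, via $V_T^0-V_{\lfloor T/2\rfloor}^0\to0$.
\end{enumerate}
Each of these steps checks out; take $a_3\le a_2$ so that $\rho\in[0,1)$. Your route is therefore valid, and it shows that $\ell^2$ convergence of the process noise plus vanishing late residual costs suffices. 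The paper's route is shorter and needs no decay rate from the IOSS function.
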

\begin{proof}
From \cref{lem:exist} the optimizer $(\xhat(0\mid T), \whatseq_T)$ of $\bbP_T$ exists for all $T \geq 0$ with $V^0_T = V_T(\xhat(0\mid T), \whatseq_T)$. Set $\chi(0)-\ox_0 = U \alpha = U_1\alpha_1$, since the constraint renders $\alpha_2 = 0$ (for $\Sigma_0=0$, $U_1$ is vacuous, $\chi(0)=\ox_0$, and every $\alpha_1$ term below is absent). With $z \eqbyd (\alpha_1, \omegaseq_T)$,
\begin{equation*}
  V_T(z) = \tfrac12 z'H_Tz + z'd_T + f_T
\end{equation*}
Here $d_T$ and $f_T$ collect the linear and constant terms, respectively; their explicit expressions are not needed. The Hessian $H_T$ is the sum of the block diagonal $\diag(\tSigma_0^{-1}, Q^{-1}, \ldots, Q^{-1})$ contributed by the prior and process terms, and the matrix representing the second-degree part of the output term $\tfrac12\sum_{j=0}^{T-1}\norm{\nu(j)}_{R^{-1}}^2$, in which $\chi$ is generated by $z$ and $\nu(j) = y(j)-C\chi(j)$. Since $\chi$ depends affinely on $z$, each residual $\nu(j)$ is affine in $z$, so each $\norm{\nu(j)}_{R^{-1}}^2$ is quadratic in $z$ with positive semidefinite second-degree part, the remaining terms being linear and constant in $z$. Since $\tSigma_0 > 0$ and $Q > 0$, the block diagonal is positive definite and hence $H_T > 0$, so the minimizer $\hat z_T \eqbyd (\hat\alpha_{1,T}, \whatseq_T)$ exists and is unique; its first component defines $\hat\alpha_{1,T}$, so that $\xhat(0\mid T) = \ox_0 + U_1\hat\alpha_{1,T}$. For every $z$ write $\delta z \eqbyd z - \hat z_T$, with components $\delta\alpha_1 \eqbyd \alpha_1-\hat\alpha_{1,T}$ and $\delta\omegaseq_T \eqbyd \omegaseq_T-\whatseq_T$. Then \cref{eq:quadmin}, expanded along the two contributions to $H_T$, gives
\begin{equation}
\begin{aligned}
  V_T(z)&-V_T^0 = \tfrac12\norm{\delta z}_{H_T}^2 \\
  &= \tfrac12\left(\norm{\delta\alpha_1}_{\tSigma_0^{-1}}^2 +\sum_{j=0}^{T-1}\big(\norm{\delta\omega(j)}_{Q^{-1}}^2+\norm{\delta\nu(j)}_{R^{-1}}^2\big)\right)
\end{aligned}
\label{eq:gap}
\end{equation}
with $\delta\nu(j) \eqbyd -C\delta\chi(j)$, $\delta\chi(0) = U_1\delta\alpha_1$, and $\delta\chi^+ = A\delta\chi+G\delta\omega$.

For horizons $p > m$, the truncation $\tz_{m\mid p} \eqbyd (\hat\alpha_{1,p}, \what(0\mid p), \ldots, \what(m-1\mid p))$ is feasible for $\bbP_m$ with $V_m(\tz_{m\mid p}) = V_p^0-\sum_{j=m}^{p-1}\ell(\what(j\mid p), \vhat(j\mid p)) \leq V_p^0$. Since $\ell(\cdot) \geq 0$, $(V_T^0)_{T\geq0}$ is nondecreasing; bounded above by \cref{lem:unibounded}, it converges to $V_\infty^0$, and for every $\eps > 0$ there is $T_\eps$ with $V_p^0-V_m^0 \leq \eps$ for all $p > m \geq T_\eps$. Applying \cref{eq:gap} at horizon $m$ to $z = \tz_{m\mid p}$ and writing the optimizer differences as
$
\delta\hat\alpha_{1|_m^p} \eqbyd \hat\alpha_{1,p}-\hat\alpha_{1,m}, \quad
\delta\what(j\mid_m^p) \eqbyd \what(j\mid p)-\what(j\mid m), \quad
\delta\vhat(j\mid_m^p) \eqbyd \vhat(j\mid p)-\vhat(j\mid m)
$
gives
\begin{equation}
\begin{aligned}
  &\tfrac12\left(\norm{\delta\hat\alpha_{1|_m^p}}_{\tSigma_0^{-1}}^2  +\sum_{j=0}^{m-1}\big(\norm{\delta\what(j\mid_m^p)}_{Q^{-1}}^2+\norm{\delta\vhat(j\mid_m^p)}_{R^{-1}}^2\big)\right) \\
  &\quad = V_m(\tz_{m\mid p})-V_m^0 \leq \eps
\end{aligned}
\label{eq:gapopt}
\end{equation}
Each term is bounded by the sum; with $\ulambda(\tSigma_0^{-1}) > 0$ and $\norm{U_1\delta\hat\alpha_{1|_m^p}} = \norm{\delta\hat\alpha_{1|_m^p}}$, the sequences $(\xhat(0\mid T))_{T\geq0}$ and, for every fixed $j\geq0$, $(\what(j\mid T))_{T\geq j+1}$ are Cauchy, hence converge, establishing \cref{it:zlim}. Denote the limits $\xhat(0\mid\infty)$ and $\whatseq_\infty$. By \cref{it:zlim} and linearity, for every fixed $j$ the optimal trajectory converges pointwise, $\xhat(j\mid T) \rightarrow \xhat(j\mid\infty)$, in which $\xhat(\cdot\mid\infty)$ is the trajectory generated by $(\xhat(0\mid\infty), \whatseq_\infty)$. The residuals converge as well, with $\vhat(j\mid\infty) \eqbyd y(j)-C\xhat(j\mid\infty) = \lim_{T\rightarrow\infty}\vhat(j\mid T)$.

Passing $p\rightarrow\infty$ in \cref{eq:gapopt}---each summand converges pointwise by \cref{it:zlim}---gives, for all $T \geq T_\eps$,
\begin{equation*}
  \norm{\delta\hat\alpha_{1|_T^\infty}}_{\tSigma_0^{-1}}^2 + \sum_{j=0}^{T-1}\big(\norm{\delta\what(j\mid_T^\infty)}_{Q^{-1}}^2+\norm{\delta\vhat(j\mid_T^\infty)}_{R^{-1}}^2\big) \leq 2\eps
\end{equation*}
in which $\hat\alpha_{1,\infty} \eqbyd U_1'(\xhat(0\mid\infty)-\ox_0)$ is the limit of $\hat\alpha_{1,T}$ by \cref{it:zlim}. By the eigenvalue bounds for $\tSigma_0, Q, R$ and $\norm{U_1\delta\hat\alpha_{1|_T^\infty}} = \norm{\delta\hat\alpha_{1|_T^\infty}}$, there is a $c_z > 0$ with
\begin{equation*}
  \norm{\delta\xhat(0\mid_T^\infty)}^2+\sum_{j=0}^{T-1}\norm{\delta\what(j\mid_T^\infty)}^2+\sum_{j=0}^{T-1}\norm{\delta\vhat(j\mid_T^\infty)}^2 \leq c_z\eps
\end{equation*}
Since the system is detectable (C1), the difference $\delta\xhat(j\mid_T^\infty)$ satisfies \eqref{eq:iioss-dec} with $\tx(j)=\delta\xhat(j\mid_T^\infty)$, $\tw(j)=\delta\what(j\mid_T^\infty)$, and $\ty(j)=-\delta\vhat(j\mid_T^\infty)$. Summing \eqref{eq:iioss-dec} over $j=0,\ldots,T-1$, discarding the nonpositive $-a_3$ terms, and applying \eqref{eq:iioss-bounds} on both ends gives
\begin{align*}
a_1\norm{\delta\xhat(T\mid_T^\infty)}^2
&\leq
a_2\norm{\delta\xhat(0\mid_T^\infty)}^2\\
&\phantom{\leq}
+ \sum_{j=0}^{T-1}c_1\norm{\delta\what(j\mid_T^\infty)}^2
+ c_2 \norm{\delta\vhat(j\mid_T^\infty)}^2 \\
&\leq
\max(a_2,c_1,c_2)\,c_z\eps
\end{align*}
for all $T \geq T_\eps$. Given $\eps' > 0$, choosing
$\eps \leq a_1\eps'^2/(\max(a_2,c_1,c_2)\,c_z)$ gives
$\norm{\xhat(T\mid T)-\xhat(T\mid\infty)} = \norm{\delta\xhat(T\mid_T^\infty)} \leq \eps'$
for all $T \geq T_\eps$, establishing \cref{it:xTT}.

Finally, by definition of $V_T$,
\begin{equation*}
  2V_T^0 = \norm{\hat\alpha_{1,T}}_{\tSigma_0^{-1}}^2 + \sum_{j=0}^{T-1}\big(\norm{\what(j\mid T)}_{Q^{-1}}^2+\norm{\vhat(j\mid T)}_{R^{-1}}^2\big)
\end{equation*}
For finite $N$ and $T \geq N+1$, dropping the tail gives $\sum_{j=0}^{N-1}(\norm{\what(j\mid T)}_{Q^{-1}}^2+\norm{\vhat(j\mid T)}_{R^{-1}}^2) \leq 2V_T^0-\norm{\hat\alpha_{1,T}}_{\tSigma_0^{-1}}^2$. Letting $T\rightarrow\infty$ (pointwise, using the convergence of $\what(j\mid T)$ and $\vhat(j\mid T)$ established above, with $\norm{\hat\alpha_{1,T}} \rightarrow \norm{\hat\alpha_{1,\infty}}$) and then $N\rightarrow\infty$ (monotone convergence),
\begin{equation*}
  \norm{\hat\alpha_{1,\infty}}_{\tSigma_0^{-1}}^2 + \sum_{j=0}^\infty\big(\norm{\what(j\mid\infty)}_{Q^{-1}}^2+\norm{\vhat(j\mid\infty)}_{R^{-1}}^2\big) \leq 2V_\infty^0
\end{equation*}
Since $\Sigma_0^\dagger = U_1\tSigma_0^{-1}U_1'$ gives $\norm{\xhat(0\mid\infty)-\ox_0}_{\Sigma_0^\dagger}^2 = \norm{\hat\alpha_{1,\infty}}_{\tSigma_0^{-1}}^2$, this is $V_\infty(\xhat(0\mid\infty), \whatseq_\infty) \leq V_\infty^0$.

Conversely, $U_2'(\xhat(0\mid T)-\ox_0) = 0$ for every $T$ and
$\xhat(0\mid T) \rightarrow \xhat(0\mid\infty)$ by \cref{it:zlim}, so
$U_2'(\xhat(0\mid\infty)-\ox_0) = 0$ and
$(\xhat(0\mid\infty), \whatseq_\infty[0:T-1])$ is feasible for $\bbP_T$; hence
$V_T(\xhat(0\mid\infty), \whatseq_\infty[0:T-1]) \geq V_T^0$. The left side increases to $V_\infty(\xhat(0\mid\infty), \whatseq_\infty)$ and the right to $V_\infty^0$, so $V_\infty(\xhat(0\mid\infty), \whatseq_\infty) \geq V_\infty^0$. Hence $V_\infty(\xhat(0\mid\infty), \whatseq_\infty) = V_\infty^0$, establishing \cref{it:Vlim}, which completes the proof.
\end{proof}

\noindent \paragraph{Proof of \cref{prop:tvkfQuns}.}
\begin{proof}
Here we establish \cref{eq:QunsInitUB}, \cref{eq:QunsLBUB}, and  \cref{eq:QunsDecrease}.
To build the $Q(j\mid k)$ function with cost decrease rather than cost increase as in $V_T^0$, define the partial costs
\begin{equation*}
V^0(j \mid k) \eqbyd \ell_x(\xhat(0 \mid k) -\ox_0) + \sum_{i=0}^{j-1} \ell(\what(i \mid k), \vhat(i \mid k))
\end{equation*}
for $j \leq k$, and flip the cost via $Z(j\mid k) \eqbyd V_\infty^0 - V^0(j \mid k)$,
where $V_\infty^0$ exists by \cref{lem:unibounded} under C2. With this definition we have the properties
\begin{align*}
  0 &\leq Z(j \mid k) \\
Z(j+1 \mid k) &=  Z(j \mid k) - \ell(\what(j \mid k), \vhat(j \mid k))
\end{align*}
Notice that $Z(j\mid k)$ does not have its usual upper bound since the system is not necessarily stabilizable. Next, by detectability (C1), there exists a quadratic IOSS-Lyapunov function
$V_\io(\tx)=\tfrac12\tx'P\tx$, with $P\succ0$ and
$a_1,a_2,a_3,c_1,c_2>0$ satisfying
\eqref{eq:iioss-bounds}--\eqref{eq:iioss-dec}, and we can rescale $V_\io$ by
any scalar $\rho > 0$ to rescale its constants. Under \cref{eq:nommeas}, the
difference $\tx(j) \eqbyd \xhat(j\mid k) - x(j)$ satisfies 
\eqref{eq:iioss-dec} with $\tw(j) = \what(j\mid k)$ and
$\ty(j) = -\vhat(j\mid k)$. Defining
\begin{equation*}
Q(j \mid k) \eqbyd Z(j \mid k) + V_\io(\xhat(j \mid k) - x(j))
\end{equation*}
gives $Q$ with these properties 
\begin{align*}
a_1\norm{x(j)-\xhat(j \mid k)}^2 & \leq Q(j \mid k)  \\
Q(j+1\mid k) &\leq Q(j\mid k) - a_3\norm{x(j)-\xhat(j\mid k)}^2
\end{align*}
where we have rescaled $V_\io$ if necessary so that $0 < c_1 \leq \tfrac12 \ulambda(Q^{-1})$ and $0< c_2 \leq \tfrac12 \ulambda(R^{-1})$.
Note that we have established the lower bound in \cref{eq:QunsLBUB}, \cref{eq:QunsDecrease}, and require  only \cref{eq:QunsInitUB}, the upper bound for $Q(0 \mid k)$, to complete the proof.

Towards that end, we start with
\begin{align*}
Q(0 \mid k)
&\eqbyd V_\infty^0 - V^0(0 \mid k) + V_\io(\xhat(0\mid k)-x(0)) \\
&\leq c_v \norm{x(0) - \ox_0}^2
   - \tfrac12\norm{\xhat(0 \mid k)-\ox_0}_{\Sigma_0^\dagger}^2 \\
&\qquad + \tfrac12\norm{\xhat(0\mid k)-x(0)}_{P}^2 \\
&\leq c_v \norm{x(0) - \ox_0}^2
   - \tfrac12\norm{\xhat(0 \mid k)-\ox_0}_{\Sigma_0^\dagger}^2 \\
&\qquad + \olambda(P)
   \bigl(\norm{\xhat(0\mid k)- \ox_0}^2 + \norm{\ox_0 - x(0)}^2\bigr) \\
&= c_0 \norm{x(0) - \ox_0}^2
   + \olambda(P) \norm{\xhat(0\mid k)- \ox_0}^2 \\
&\qquad - \tfrac12\norm{\xhat(0 \mid k)-\ox_0}_{\Sigma_0^\dagger}^2
\end{align*}
where $c_0 \eqbyd c_v + \olambda(P)$, which is positive since $P \succ 0$, and we have used \cref{lem:unibounded}, which requires C2, and \cref{eq:quadbound} and the fact that $\norm{a + b}^2 \leq 2 (\norm{a}^2 + \norm{b}^2)$ for $a, b \in \bbR^n$.

We would like to rescale the IOSS-Lyapunov function $V_\io$ again if necessary, by rescaling matrix $P$, so that $\olambda(P) \norm{\xhat(0\mid k)- \ox_0}^2 \leq \tfrac12\norm{\xhat(0 \mid k)-\ox_0}_{\Sigma_0^\dagger}^2$, but recall that $\Sigma_0^\dagger$ is only semidefinite so that is not possible for arbitrary vectors. Instead we again use the SVD of $\Sigma_0$ to express
$\xhat(0 \mid k)-\ox_0 = U_1 \alpha_1 + U_2 \alpha_2$, which gives $\alpha_2 = 0$
because of the constraint in $\bbP_k$, so $\xhat(0 \mid k) - \ox_0$ lies in
$\mc{R}(\Sigma_0)$. Rescaling $V_\io$ if necessary so that
$2\olambda(P) \olambda(\Sigma_0) \leq 1$ and applying \cref{eq:rangebound} gives
\begin{equation*}
\olambda(P) \norm{\xhat(0\mid k)- \ox_0}^2
  - \tfrac12\norm{\xhat(0 \mid k)-\ox_0}_{\Sigma_0^\dagger}^2 \leq 0
\end{equation*}
for all feasible $\xhat(0 \mid k)$ in $\bbP_k$.
Using this fact in the previous inequality for $Q(0 \mid k)$ gives
\begin{equation*}
Q(0 \mid k) \leq c_0 \norm{x(0) - \ox_0}^2
\end{equation*}
verifying \cref{eq:QunsInitUB}.
We have therefore established \cref{eq:QunsInitUB}, the lower bound in \cref{eq:QunsLBUB}, and \cref{eq:QunsDecrease}.  Notice that we have produced quadratic $\mc{K}_\infty$-functions verifying the final claim in the proposition.
\end{proof}
\end{document}